\newif\iflncs
\newif\ifnotes
\newif\ifblind\blindfalse
\theoremstyle{plain}
\newtheorem{problemnormal}{Problem}
\title{SVP$_p$ is Deterministically NP-Hard for all $p > 2$, \\
Even to Approximate Within a Factor of $2^{\log^{1-\varepsilon} n}$}
\author{Isaac M Hair \thanks{\texttt{isaacmhair@gmail.com}} \\ UCSB, UCLA \and Amit Sahai \thanks{\texttt{sahai@cs.ucla.edu}} \\ UCLA}
\date{}
\begin{document}

\maketitle
\noteswarning

\thispagestyle{empty}

\begin{abstract}
    We prove that SVP$_p$ is NP-hard to approximate within a factor of $2^{\log^{1 - \varepsilon} n}$, for all constants $\varepsilon > 0$ and $p > 2$, under standard deterministic Karp reductions. This result is also the first proof that \emph{exact} SVP$_p$ is NP-hard in a finite $\ell_p$ norm. Hardness for SVP$_p$ with $p$ finite was previously only known if NP~$\not \subseteq$~RP, and under that assumption, hardness of approximation was only known for all constant factors. As a corollary to our main theorem, we show that under the Sliding Scale Conjecture, SVP$_p$ is NP-hard to approximate within a small polynomial factor, for all constants $p > 2$.
    
    Our proof techniques are surprisingly elementary; we reduce from a \emph{regularized} PCP instance directly to the shortest vector problem by using simple gadgets related to Vandermonde matrices and Hadamard matrices.
\end{abstract}

\newpage

\setcounter{page}{1}

\section{Introduction}

A lattice $\mathcal{L}$ is the set of all integral linear combinations of a set of basis vectors $\{\mat b_1, \ldots, \mat b_n\}$. Lattices are a classical mathematical object, with research on the subject dating back centuries. In the last several decades, lattices have come to the forefront of study in theoretical computer science, due to their applications in algorithmic number theory \cite{lenstra1982factoring}, integer programming \cite{lenstra1983integer, kannan1987minkowski, frank1987application, schrijver1998theory}, coding theory \cite{forney1988coset, de2002some, zamir2014lattice}, and perhaps most strikingly, post-quantum cryptography \cite{ajtai1998shortest, nguyen2001two, micciancio2007worst, regev2009lattices, peikert2016decade}. Most cryptographic primitives based on lattice problems enjoy a special feature: while the problems used in actual cryptosystems are sampled from random distributions, the computational hardness of breaking these cryptosystems is implied by the difficulty of solving \emph{worst-case} problems. Perhaps the most foundational of these worst-case problems is the gap shortest vector (GapSVP) problem. As the name suggests, this problem amounts to approximating the length of the shortest nonzero vector in a given lattice $\mathcal{L}$, where we measure the length of a vector $\mat w \in \mathcal{L}$ using the $\ell_p$ norm:\footnote{
Recall that    for finite $p$, we define the $\ell_p$ norm of a vector $\mat w \in \mathbb{Z}^m$ as
$\|\mat w\|_p \coloneqq \left(\sum_{i = 1}^m \vert \mat w_i\vert^p\right)^{1/p},$
    and for $\ell_\infty$ we define
    $\|\mat w\|_\infty \coloneqq \max_i \vert\mat w_i\vert.$
    We use the $\ell_0$ \emph{pseudo-norm} to denote the number of nonzero entries in a given vector.
}

\begin{problemnormal}[$\gamma$-GapSVP$_p$]
    Given a matrix $\mat M \in \mathbb{Z}^{n \times m}$ and a threshold $k$, distinguish between the following two cases, when one of them is promised to hold:
    \begin{enumerate}
        \item There exists a nonzero vector $\mat v \in \mathbb{Z}^n$ such that $\|\mat v\mat M\|_p \leq k$.
        \item For all nonzero vectors $\mat v \in \mathbb{Z}^n$, we have $\|\mat v \mat M\|_p > \gamma k$.
    \end{enumerate}
    (Here and throughout the paper, all vectors are row vectors.)
\end{problemnormal}

There is a rich body of work exploring upper bounds \cite{kannan1987minkowski, ajtai2002sampling, blomer2009sampling, eisenbrand2022approximate} and lower bounds \cite{micciancio2001shortest, haviv2007tensor, aggarwal2018gap, bhattiprolu2024inapproximability} for all choices of $p$, as well as the relationship between GapSVP instances in different $\ell_p$ norms \cite{regev2006lattice, aggarwal2021dimension}. Besides serving as a natural counterpart to the results for $p = 2$, GapSVP results for $p \neq 2$ have proved important historically because they often lead to results for the $p = 2$ case. This is perhaps most obvious when examining the chronology of GapSVP lower bounds.

The shortest vector problem (without an approximation gap) was first proven to be NP hard in the $\ell_\infty$ norm by van Emde Boas \cite{van1981another}, who conjectured that the same hardness result should hold for $\ell_2$. After nearly two decades, Ajtai \cite{ajtai1998shortest} gave a breakthrough reduction showing that the exact shortest vector problem is hard in the $\ell_2$ norm
unless \textnormal{NP $\subseteq$ RP}, i.e. with the caveat that his reduction is \emph{randomized}. Over the next few years, researchers improved upon Ajtai's results \cite{cai1998approximating, micciancio2001shortest, khot2003hardness, khot2005hardness, haviv2007tensor}. Many of the improvements were first discovered for $\ell_p$ with $p$ being a sufficiently large constant, and then were generalized to all $p \geq 1$. The current state of affairs with regard to the assumption that $\textnormal{NP} \not \subseteq \textnormal{RP}$ is summarized in the following theorem.

\begin{theorem} [\cite{khot2005hardness}]
    For all $p \geq 1$, there is no randomized polynomial time algorithm for $\gamma$-GapSVP$_p$, for $\gamma = O(1)$, unless \textnormal{NP $\subseteq$ RP}.
\end{theorem}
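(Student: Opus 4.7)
The plan is to assemble three ingredients: (i) a starting NP-hard gap problem that admits a direct lattice encoding, (ii) a randomized gadget that lifts GapCVP$_p$ hardness to GapSVP$_p$ hardness, and (iii) a gap-amplification step that boosts the resulting modest constant gap to an arbitrary constant. For (i) I would start from $\gamma'$-GapCVP$_p$ with some explicit constant $\gamma' > 1$, whose NP-hardness follows from a direct reduction from MAX-3SAT (or from a PCP): each clause contributes coordinates that are $0$ on satisfying assignments and $\Omega(1)$ on violated ones, placing satisfying assignments close to a canonical target $\mat t$ and unsatisfying ones far away in $\ell_p$ distance.

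The heart of the argument is step (ii), converting a GapCVP$_p$ instance $(\mat B, \mat t, k)$ into a GapSVP$_p$ instance. The fundamental difficulty is that SVP has no target vector, so one cannot simply point the algorithm at $\mat t$. I would follow the Ajtai--Micciancio blueprint: append to $\mat B$ the row $\mat t$ concatenated with a randomly sampled auxiliary lattice $\mathcal{A}$ (for instance one derived from a random BCH code scaled for the $\ell_p$ norm) whose minimum $\ell_p$ distance is provably $\Omega(k)$ with high probability over the random choice. Choosing $\mathcal{A}$ so that every nontrivial combination supported on the auxiliary block has $\ell_p$ norm exceeding $k$ forces any short vector in the combined lattice to use $\mat t$ with coefficient $\pm 1$, and such a vector projects onto a CVP solution. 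For $p \neq 2$ the code must be rescaled so that minimum $\ell_p$ norm rather than Hamming weight governs the gadget's distance.

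Having obtained GapSVP$_p$ hardness with some fixed gap $\gamma_0 > 1$, I would amplify via iterated tensor product of the hard instance against itself. While $\lambda_1^{(p)}$ is not multiplicative under tensoring in general, for the specific family of lattices produced by the reduction above the ``short'' side remains multiplicative because the short vectors are forced by the gadget to lie along one designated coordinate block. A small structural modification (adding a BCH-style check on the tensored block, in the spirit of Khot's ``augmented tensoring'') ensures that the ``long'' side also behaves multiplicatively. Iterating $O(\log C / \log \gamma_0)$ times then boosts the gap to any desired constant $C$.

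The main obstacle is step (ii), and this is precisely where the randomness enters, forcing only the weaker assumption $\mathrm{NP} \not\subseteq \mathrm{RP}$: the random-code gadget satisfies the required minimum-distance property only with high probability, and no polynomial-time deterministic construction with comparable parameters is known in any $\ell_p$ norm for finite $p$. A secondary obstacle is to verify that the augmented tensoring step preserves both the gap and polynomial instance size simultaneously for every $p \in [1,\infty)$, which typically requires case analysis as $p$ varies (small $p$ near $1$ behaves differently from large $p$, with the $\ell_1$ case often needing a separate bookkeeping of coordinate supports rather than norms).
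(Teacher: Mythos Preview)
The paper does not prove this theorem at all: it is stated purely as prior work, attributed to Khot~\cite{khot2005hardness}, and serves only as motivation for the paper's own (very different) deterministic reduction. There is therefore no ``paper's proof'' to compare against.

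That said, your sketch is a reasonable outline of the standard Ajtai--Micciancio--Khot pipeline, but your description of step (ii) misidentifies the key property of the random gadget. Requiring only that the auxiliary lattice $\mathcal{A}$ have minimum $\ell_p$ distance $\Omega(k)$ is not enough to force the $\mat t$-coefficient to be $\pm 1$: a short vector could simply ignore the $\mat t$ row entirely and live in the original CVP lattice, which may well contain short nonzero vectors unrelated to the target. What the construction actually needs is a \emph{locally dense} lattice --- one that simultaneously has large minimum distance \emph{and} contains exponentially many lattice points within a small ball around a known center $\mat s$. The exponential count is what lets a random shift hit the ``good'' coset with high probability, and is precisely the property that no one knows how to achieve deterministically (hence the randomized reduction). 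Your write-up elides this, and as stated the soundness argument for the SVP instance would not go through.

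Your tensoring discussion is also somewhat optimistic: the claim that ``the short side remains multiplicative because the short vectors are forced by the gadget to lie along one designated coordinate block'' is not how Khot's argument works. The augmented-tensor / intersection-lattice machinery is needed precisely because short vectors in the tensor product are \emph{not} a priori of rank one, and establishing a lower bound on $\lambda_1$ of the tensored lattice in the NO case requires genuine work specific to the structure of the hard instances produced in step (ii).
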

Hardness of approximation for larger (but still subpolynomial) factors are known only under significantly complexity stronger assumptions like \textnormal{NP $\not \subseteq$ RTIME($2^{(\log n)^{O(1)}}$)} or  \textnormal{NP $\not \subseteq$ $\bigcap_{\delta > 0}$RTIME($2^{n^\delta}$)} ~\cite{khot2005hardness,regev2006lattice, haviv2007tensor}.
All known reductions are randomized, and the best approximation factor we can achieve under NP $\subseteq$ RP is still just $\gamma = O(1)$. There is a single exception: Dinur showed that $\gamma$-GapSVP$_\infty$ is NP hard under a deterministic reduction, where $\gamma = n^{c/\log\log n}$ for a sufficiently small constant $c > 0$ \cite{dinur2002approximating}.

Thus, a central 40-year-old open question is whether we can get a deterministic NP-hardness reduction when $p$ is finite, which was explicitly asked by many authors including van Emde Boas \cite{van1981another}, Micciancio \cite{micciancio2001shortest, micciancio2012inapproximability}, Haviv and Regev \cite{haviv2007tensor}, and Bennett and Peikert \cite{bennett2022hardness, bennett2023complexity}.

\begin{question}[\textnormal{{(}\cite{bennett2022hardness}{)}}] \label{q:1}
    ``Indeed, it is a notorious, long-standing open problem to prove that GapSVP$_p$ is NP-hard, \textbf{even in its exact form},\footnote{Emphasis added.} under a deterministic reduction for some finite $p$.''
\end{question}

Another natural question is whether we can improve the NP hardness of approximation factor, even under randomized reductions.

\begin{question}[\textnormal{\emph{(Implicit in the complexity community)}}] \label{q:2}
    Can we find a polynomial time (potentially randomized) reduction from an NP complete problem to $\omega(1)$-GapSVP$_p$, for some finite $p$?
\end{question}

One explanation for the lack of improvement since Haviv and Regev's work is that essentially all known hardness of approximation results for GapSVP$_p$ (with the exception of $p = \infty$) boil down to a similar technique: They first achieve a constant gap via a (randomized) polynomial time reduction from the closest vector problem, which is already known to be NP-hard to approximate~\cite{arora1997hardness}, by using a special mathematical object called a \emph{locally dense lattice}. Then they apply tensoring to boost the approximation factor past a constant. Due to the critical usage of tensoring, this approach appears to be incapable of yielding hardness results for $\gamma = \omega(1)$ under the sole assumption that NP $\subseteq$ RP. (Each tensoring operation increases the lattice dimension by a multiplicative factor of poly$(n)$, and we need a superconstant number of such operations to reach $\gamma = \omega(1)$.) Derandomization also appears to be quite difficult; despite decades of research, we do not have a deterministic construction of locally dense lattices.

\paragraph{A New More Direct Approach.}
In this paper, we resolve both questions above in the affirmative. We present a \emph{deterministic, nearly-polynomial factor NP hardness of approximation result for $\gamma$-GapSVP$_p$ for all constants $p > 2$}, based on a direct reduction from a carefully pre-processed probabilistically checkable proof (PCP) for NP. Our reduction bypasses the closest vector problem entirely; its closest relative appears to be Dinur's 2002 NP hardness of approximation result for $\gamma$-GapSVP$_\infty$. Before presenting our results more formally, we give some background on PCPs.

A PCP consists of a \emph{verifier} which takes as input a sequence of symbols over alphabet $\Sigma$ (the \emph{proof}) and behaves as follows. The verifier uses $r$ random bits to select $q$ locations in the proof to read, and accepts iff the symbols at those locations satisfy some constraint (which may or may not be different for each choice of random bits). Finally, a (perfect correctness) PCP theorem provides a deterministic polynomial time reduction from instances of the NP complete problem circuit SAT to the \emph{description} of a polynomial-time verifier, so that:
\begin{enumerate}
    \item (Completeness) If the SAT instance was satisfiable, then there exists a proof which causes the verifier to accept with probability 1.
    \item (Soundness) If the SAT instance was unsatisfiable, then every proof will cause the verifier to accept with probability at most $s$ (the \emph{soundness parameter}).
\end{enumerate}

The strongest constant-query PCP we currently know how to construct is due to Dinur, Fischer, Kindler, Raz, and Safra:

\begin{theorem}[\cite{dinur1999pcp}] \label{thm:PCPbeta}
    For every constant $\varepsilon > 0$, there exists a PCP for SAT instances of size $n$ that uses $r = O(\log n)$ random bits to make $q = O(1)$ queries to a proof over alphabet $\Sigma$, where the alphabet size is $\vert\Sigma\vert = 2^{\Theta(\log^{1 - \varepsilon} n)}$ and the soundness parameter is $s = 2^{-\Theta(\log^{1 - \varepsilon} n)}$. The PCP can be constructed deterministically in $n^{O(1)}$ time.
\end{theorem}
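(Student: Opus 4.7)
My plan is to combine the standard PCP theorem with a carefully derandomized soundness-amplification procedure that trades queries for larger alphabet size, while keeping the randomness budget pinned at $O(\log n)$.

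First, I would invoke the classical PCP theorem (in the form of a 2-query projection-game / label-cover reduction for SAT) to obtain a verifier with $r_0 = O(\log n)$ random bits, exactly $q = 2$ queries into a proof over a constant-size alphabet, perfect completeness, and some universal constant soundness $s_0 < 1$. This 2-query, perfectly complete starting point is what makes later soundness amplification via parallel repetition feasible without blowing up the query count.

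Next, I would amplify the soundness using $k$-fold parallel repetition with $k = \Theta(\log^{1-\varepsilon} n)$. Because independent repetition would inflate the randomness to $k \cdot r_0 = \Theta(\log^{2-\varepsilon} n)$, I would instead draw the $k$ coordinates along a length-$k$ random walk on a constant-degree explicit expander whose vertex set is the random-string space of the base PCP. This adds only $O(k) = O(\log^{1-\varepsilon} n)$ extra random bits, keeping the total at $O(\log n)$. The resulting verifier still makes $q = 2$ queries (now to ``bundles'' of $k$ symbols each), over an alphabet of size $|\Sigma|^k = 2^{O(\log^{1-\varepsilon} n)}$, and remains perfectly complete. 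Finally, I would invoke a parallel-repetition theorem for projection games on expander walks (in the style of Raz, Holenstein, and Dinur-Meir) to conclude that the soundness of the repeated game drops to $s_0^{\Omega(k)} = 2^{-\Theta(\log^{1-\varepsilon} n)}$, matching the claimed parameters. The base PCP, the explicit expander, the expander walk, and the composition of verifiers are all deterministic and run in time $n^{O(1)}$.

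The main obstacle is reconciling the tension between the randomness budget and the strength of the soundness decay: naive independent $k$-fold repetition gives the needed $s_0^{\Omega(k)}$ decay but violates the $O(\log n)$ randomness bound, while weaker pseudorandom samplers typically fail to deliver exponential-in-$k$ soundness decay. The delicate technical step is therefore establishing an expander-walk parallel repetition theorem for projection games that simultaneously (a) uses only $O(k)$ auxiliary random bits, (b) preserves perfect completeness, and (c) retains the $s_0^{\Omega(k)}$ soundness bound up to constants depending on the expander's spectral gap. This is precisely the quantitative content the cited DFKRS analysis has to supply, and it is where the bulk of the work lies.
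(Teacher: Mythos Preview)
The paper does not prove this theorem at all; it is quoted from \cite{dinur1999pcp} and used purely as a black box input to the reduction. So there is no ``paper's own proof'' to compare against.

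That said, your proposal has a genuine gap on its own merits. You correctly identify the crux --- a parallel-repetition theorem for projection games along expander walks that simultaneously (a) spends only $O(k)$ fresh random bits and (b) still drives soundness down to $s_0^{\Omega(k)}$ --- and then simply assume such a theorem is available. It is not. The Raz/Holenstein analyses apply to \emph{independent} repetition and genuinely consume $\Theta(k\log n)$ random bits; the Dinur--Meir derandomized-repetition results you allude to require highly structured base PCPs (essentially already of the DFKRS algebraic type) rather than a generic 2-query label cover, and do not give the clean $s_0^{\Omega(k)}$ decay for arbitrary projection games over expander walks. Indeed, a black-box theorem of the shape you posit would, pushed to $k=\Theta(\log n)$, essentially resolve the Sliding Scale Conjecture --- which the present paper explicitly treats as open. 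So the step ``invoke an expander-walk parallel repetition theorem'' is not a technicality but the entire difficulty, and it is not something DFKRS supplies.

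For context, the actual \cite{dinur1999pcp} construction does not go through parallel repetition at all. It is algebraic: the proof is encoded as a low-degree multivariate polynomial over a field of size $2^{\Theta(\log^{1-\varepsilon} n)}$, consistency is checked via a low-degree test (plane-vs-point / line-vs-point), and the small soundness comes directly from the Schwartz--Zippel-type robustness of that test combined with recursive proof composition. The large alphabet and the $2^{-\Theta(\log^{1-\varepsilon} n)}$ soundness arise together from the field size, not from repeating a constant-soundness game.
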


For our result, we will not leverage the PCP theorem exactly as stated above. Instead we first pre-process the PCP using a regularization technique due to Hirahara and Moshkovitz \cite{hirahara2023regularization}, which ensures that every position in the proof is read with equal probability. Details on this technique are given in Section \ref{sec:regPCP}.

At a very high level, our reduction will make use of simple, deterministically constructable gadgets related to Vandermonde matrices and Hadamard matrices.\footnote{Our Hadamard matrix gadgets bear topical resemblance to the ``norm measuring rows'' in Dinur's NP hardness of approximation result for $\gamma$-GapSVP$_\infty$ \cite{dinur2002approximating}, but it appears that we leverage properties of the gadgets not discussed in the SVP literature before. Additionally, Dinur's hardness result for $\gamma$-GapSVP$_\infty$ relies on a novel characterization of NP in terms of a gap problem referred to as SSAT, which is very different from the regularized PCP we leverage.} Our analysis involves a novel approach for examining how these gadgets interact specifically within the context of a regularized PCP. The construction is not very technical; rather, it requires a paradigm shift in how we think about reductions to lattice problems.

\subsection{Our Results}

Below we give a basic version of our main theorem; for the full statement, see Theorem \ref{thm:mainresultformal}.

\begin{theorem}[Informal] \label{thm:mainresultinformal}
    Suppose there exists a PCP for SAT instances of size $n$ that uses $r = O(\log n)$ random bits to make $q = O(1)$ queries to a proof over alphabet $\Sigma$ and has soundness $s$, where $1/s \geq r^{\omega(1)}$. Suppose further that every proof location is read with the same probability, and that the PCP can be constructed deterministically in polynomial time. Then for all constants $p > 2$ (and for $p = \infty$), $\gamma$-GapSVP$_p$ on $n^{O(1)}$ dimensional lattices is NP hard, where $\gamma = (1/s)^{\Theta(1)}$.
\end{theorem}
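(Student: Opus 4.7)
The plan is to build the basis matrix $\mat M$ of the target lattice directly from the regularized PCP, with rows indexed by pairs $(i, \sigma) \in [N] \times \Sigma$ (so that the integer coefficient vector $\mat v$ assigns a weight $v_{i,\sigma}$ to each candidate symbol at each proof location) and with columns organized into two blocks: a \emph{position block}, containing $N$ small Vandermonde sub-gadgets, one per proof position, and a \emph{test block}, containing $T = 2^r$ Hadamard sub-gadgets, one per random string of the PCP. The intended ``honest'' lattice vectors correspond to proofs $\pi : [N] \to \Sigma$ via $v_{i,\pi(i)} = 1$ and $v_{i,\sigma} = 0$ otherwise; the two blocks are designed so that (i) any non-honest pattern of coefficients within a single position is heavily charged by the position block, and (ii) any honest pattern that fails to satisfy a PCP test is heavily charged by the test block.

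Concretely, the position gadget at location $i$ consists of rows of the form $(1, \alpha_\sigma, \alpha_\sigma^2, \ldots, \alpha_\sigma^{|\Sigma|-1})$ for distinct nonzero $\alpha_\sigma$. By Vandermonde nondegeneracy, any nontrivial integer combination $\sum_\sigma v_{i,\sigma} (1, \alpha_\sigma, \ldots)$ is either a single indicator row (contributing $O(1)$ to the $\ell_p$ norm after suitable scaling) or has $\ell_p$ norm growing with the number of distinct active $\sigma$. The test gadget at random string $R$, querying positions $(i_1, \ldots, i_q)$, has columns designed from a Hadamard matrix so that for each accepting $q$-tuple $(\sigma_{i_1}, \ldots, \sigma_{i_q})$ the corresponding $q$ rows sum to the zero vector, while for a rejecting tuple the sum has $\ell_p$ norm $\Theta(|\Sigma|)$, since distinct Hadamard rows of dimension $|\Sigma|$ differ in $|\Sigma|/2$ coordinates. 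Completeness is then immediate: an honest vector built from a satisfying proof makes every test gadget cancel and every position gadget contribute $O(1)$, so its $\ell_p$ norm is $k = O(N^{1/p})$ up to logarithmic factors.

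For soundness I would case on the structure of the nonzero $\mat v$. In the \emph{clean} regime, where $v_{i,\cdot}$ is a $\{0,1\}$-indicator of a single symbol at every position, $\mat v$ encodes some assignment $\pi$; PCP soundness guarantees that $\pi$ fails at least a $1 - s$ fraction of the $T$ tests, and regularity guarantees that these failures are distributed uniformly across the test block, giving
\[
\|\mat v \mat M\|_p \;\geq\; \Omega\!\left(|\Sigma| \cdot ((1-s) T)^{1/p}\right),
\]
which, after setting the block scalings appropriately and applying the parameters of Theorem \ref{thm:PCPbeta}, exceeds $k$ by the advertised factor $(1/s)^{\Theta(1)}$. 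In the \emph{dirty} regime, where some $v_{i,\cdot}$ is not a single indicator, the Vandermonde position gadget already forces a large $\ell_p$ contribution from the position block alone, uniformly in the pattern of dirtiness.

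The hard part will be the \emph{mixed} regime, in which the adversary keeps all but a tiny fraction of positions clean and uses the few dirty positions to cancel test-block contributions coming from many tests simultaneously. Controlling this requires a quantitative noncancellation bound: any nontrivial integer combination of Hadamard-gadget rows with non-$\{0,1\}$ coefficients must leave a residue either in the position block (via the Vandermonde gadget) or across many other tests (via regularity), so cancellations of total mass $L$ in one place cost at least $\Omega(L)$ elsewhere. The restriction $p > 2$ enters precisely in this bookkeeping: Hadamard rows of dimension $|\Sigma|$ have pairwise $\ell_2$ correlations of order $\sqrt{|\Sigma|}$ relative to their norm, which is too much cancellation to preserve the gap at $p = 2$, but for any $p > 2$ the concentration of $\ell_p$ mass outstrips the $\ell_2$ correlation enough to sustain a factor-$(1/s)^{\Theta(1)}$ separation, with the analysis tightening naturally in the $p = \infty$ limit.
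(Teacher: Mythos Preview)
Your proposal diverges from the paper in its basic architecture, and the divergence creates a real gap. The paper indexes rows by pairs (constraint, \emph{satisfying local assignment}) in $[M]\times\Sigma^q$, not by pairs (position, symbol) in $[N]\times\Sigma$; this is not cosmetic. With your indexing you are forced to encode each PCP predicate into the \emph{column} gadget attached to that test, and you assert that a Hadamard-based gadget can be designed so that ``for each accepting $q$-tuple the corresponding $q$ rows sum to the zero vector, while for a rejecting tuple the sum has $\ell_p$ norm $\Theta(|\Sigma|)$.'' You give no construction, and for a general predicate with many accepting tuples this is a highly over-constrained linear system: you are asking that a fixed map $\Sigma\to\{\pm1\}^{|\Sigma|}$, repeated $q$ times, annihilate every accepting tuple and no rejecting tuple. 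Even for $q=2$ this only works cleanly for equality-type constraints; for arbitrary $O(1)$-ary predicates it does not follow from Hadamard orthogonality, and the paper never attempts anything like it. Instead, the paper bakes the predicate into \emph{which rows exist} (only satisfying local assignments survive) and uses the Hadamard block solely for its orthogonality and anti-concentration, together with a separate random-subselection argument (Lemma~\ref{lem:implication}) to contradict soundness.

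Two further points where your outline would break. First, your completeness witness---the $\{0,1\}$ indicator of a proof $\pi$---does not have small norm in the paper's construction, because the Vandermonde blocks are scaled up by a huge factor and a single Vandermonde row is already long; the paper's short vector is instead a $\{-1,0,1\}$ combination of the $M$ rows corresponding to $\pi$, found by a pigeonhole argument (Proposition~\ref{lem:SmallCombination}) that \emph{zeros out} all Vandermonde blocks simultaneously. Second, your use of the position-Vandermonde gadget (``$\ell_p$ norm growing with the number of distinct active $\sigma$'') is not the property Vandermonde matrices deliver; what they give is Corollary~\ref{cor:Vandermonde}, namely that any combination which \emph{vanishes} on the gadget must involve many rows. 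The paper exploits exactly this: it scales the Vandermonde blocks so large that any short lattice vector must annihilate them, and then reads off structural constraints on $\mat v$ (Lemmas~\ref{lem:minnonzero} and~\ref{lem:numberpairs}) from the resulting kernel conditions. Your ``mixed regime'' bookkeeping would need to be redone from scratch around these ideas.
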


In Section \ref{sec:puttingittogether}, we show that the PCP from Theorem \ref{thm:PCPbeta} can be regularized via the techniques of \cite{hirahara2023regularization} to get a PCP that satisfies all preconditions of Theorem \ref{thm:mainresultinformal}, where the alphabet size is $\vert\Sigma\vert = 2^{\Theta(\log^{1 - \varepsilon}n)}$ and the soundness is $s = 2^{-\Theta(\log^{1-\varepsilon}n)}$. From this along with Theorem \ref{thm:mainresultinformal} we derive the following.

\begin{corollary} \label{cor:NPhard}
    For all constants $\varepsilon > 0$ and $p > 2$, $\gamma$-GapSVP$_p$ on $n$ dimensional lattices is NP hard, where $\gamma = 2^{\log^{1 - \varepsilon} n}$.
\end{corollary}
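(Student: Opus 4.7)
The plan is to compose Theorem \ref{thm:PCPbeta}, the Hirahara-Moshkovitz regularization \cite{hirahara2023regularization}, and Theorem \ref{thm:mainresultinformal}. Given any $\varepsilon > 0$, I would first pick a parameter $\varepsilon_0 \in (0, \varepsilon)$ — a bit of slack that will absorb constant factors appearing in the exponent later — and invoke Theorem \ref{thm:PCPbeta} with parameter $\varepsilon_0$ to obtain a PCP with $r = O(\log n)$, $q = O(1)$, alphabet size $|\Sigma| = 2^{\Theta(\log^{1-\varepsilon_0} n)}$, and soundness $s = 2^{-\Theta(\log^{1-\varepsilon_0} n)}$, deterministically constructible in polynomial time.

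Next, I would apply the regularization procedure of Hirahara-Moshkovitz (as developed in Section \ref{sec:regPCP}) to this PCP to obtain an equivalent PCP in which every position in the proof is queried with uniform probability, while preserving the randomness, query complexity, alphabet size, and soundness up to constants in the exponents — and, crucially, keeping the construction deterministic and polynomial-time. This is the step I expect will require the most care: one must check that the regularization does not blow up the alphabet nor degrade soundness below the threshold required by Theorem \ref{thm:mainresultinformal}, which is exactly what Section \ref{sec:puttingittogether} is set up to verify. After regularization, the hypothesis $1/s \geq r^{\omega(1)}$ is immediate, since $1/s = 2^{\Theta(\log^{1-\varepsilon_0} n)}$ while $r^{\omega(1)} = 2^{\omega(\log \log n)}$, and $\log^{1-\varepsilon_0} n = \omega(\log \log n)$ for any $\varepsilon_0 < 1$.

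Finally, I would feed the regularized PCP into Theorem \ref{thm:mainresultinformal} to obtain, for any constant $p > 2$, a deterministic Karp reduction from size-$n$ SAT to $\gamma$-GapSVP$_p$ on an $N = n^{O(1)}$ dimensional lattice with $\gamma = (1/s)^{\Theta(1)} = 2^{c \log^{1-\varepsilon_0} n}$ for some constant $c > 0$. Since $\log N = \Theta(\log n)$, this rescales to $\gamma = 2^{c' \log^{1-\varepsilon_0} N}$ for a constant $c' > 0$, and because $\varepsilon_0 < \varepsilon$ we have $c' \log^{1-\varepsilon_0} N \geq \log^{1-\varepsilon} N$ for all sufficiently large $N$ (the polylog factor $\log^{\varepsilon - \varepsilon_0} N$ eventually dominates any constant). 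This establishes the claimed $\gamma = 2^{\log^{1-\varepsilon} N}$ NP-hardness on $N$-dimensional lattices. The only genuinely nontrivial piece is the regularization in step two; the outer composition is essentially bookkeeping once Theorem \ref{thm:mainresultinformal} is in hand.
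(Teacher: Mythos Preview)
Your proposal is correct and matches the paper's own argument essentially step for step: pick a smaller exponent parameter (the paper uses $\varepsilon' = \varepsilon/2$, you use any $\varepsilon_0 < \varepsilon$), invoke the Dinur--Fischer--Kindler--Raz--Safra PCP, regularize via Hirahara--Moshkovitz, verify the soundness-versus-randomness hypothesis, apply the main theorem, and then absorb the polynomial blowup in dimension into the slack between $\varepsilon_0$ and $\varepsilon$. The only cosmetic difference is that the paper cites the formal Theorem~\ref{thm:mainresultformal} (checking $(1/s)^{1/q} \geq r^{\omega(1)}$ and $2^r|\Sigma|^q = n^{O(1)}$ explicitly) rather than the informal version, but since $q = O(1)$ this is immaterial.
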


\begin{remark}
    We suspect that, by adapting the PCP given by Dinur, Harsha, and Kindler \cite{dinur2015polynomially}, the same NP hardness result should hold for slightly \emph{subconstant} values of $\varepsilon$. We do not pursue such an improvement in this paper.
\end{remark}

Corollary \ref{cor:NPhard} succinctly resolves both Question \ref{q:1} and Question \ref{q:2} in the affirmative. Due to the elementary nature of our techniques (see Section \ref{sec:setup} and Section \ref{sec:actualconstruct}), we believe that our approach should inspire hardness of approximation results for several other lattice problems, hopefully paving the way\footnote{It's also worth noting that we found our hardness of approximation result as a byproduct of research on new sources of hardness to build cryptographic primitives such as public key encryption and witness encryption; we expect the connection between complexity-theoretic problems and cryptographic problems to inspire even more results.} to a deterministic NP hardness of approximation result for $\gamma$-GapSVP$_2$. 

Another consequence of Theorem \ref{thm:mainresultinformal} is that we can get polynomial factor NP hardness of approximation under the Sliding Scale Conjecture~\cite{bellare1993efficient}, which posits that there exists a PCP for SAT with the same parameters as in Theorem \ref{thm:PCPbeta}, except $\vert \Sigma\vert = n^{\Theta(1)}$ and $s = n^{-\Theta(1)}$.\footnote{It's known that any non-degenerate PCP requires $\vert \Sigma\vert \geq (1/s)^{\Omega(1/q)}$, so when q is constant we cannot hope to achieve polynomially small soundness without a polynomially large alphabet.} Details are given in Section \ref{sec:puttingittogether}.

\begin{corollary} \label{cor:sliding}
    Assuming the Sliding Scale Conjecture, $\gamma$-GapSVP$_p$ on $n$ dimensional lattices is NP hard for all constants $p > 2$ (and for $p = \infty$), where $\gamma = n^c$ for a constant $c > 0$ that depends on $p$.
\end{corollary}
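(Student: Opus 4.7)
The plan is to instantiate Theorem \ref{thm:mainresultinformal} with a regularized version of the PCP guaranteed by the Sliding Scale Conjecture, exactly mirroring the derivation of Corollary \ref{cor:NPhard} from Theorem \ref{thm:PCPbeta}. The Sliding Scale PCP has $r = O(\log n)$ random bits, $q = O(1)$ queries, alphabet size $\vert\Sigma\vert = n^{\Theta(1)}$, and soundness $s = n^{-\Theta(1)}$. The hypothesis $1/s \geq r^{\omega(1)}$ demanded by Theorem \ref{thm:mainresultinformal} is comfortably satisfied, since $1/s = n^{\Omega(1)}$ dominates $(\log n)^{\omega(1)}$ by a quasi-polynomial margin.

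Next, I would apply the Hirahara--Moshkovitz regularization procedure \cite{hirahara2023regularization}, precisely as invoked in Section \ref{sec:puttingittogether} to obtain Corollary \ref{cor:NPhard}, so that every position in the proof is queried with equal probability. Tracking parameters through the regularization shows that the alphabet size remains $n^{\Theta(1)}$, the soundness remains $n^{-\Theta(1)}$ (only polynomial factors are lost), and the resulting proof length stays polynomial in $n$. Thus every precondition of Theorem \ref{thm:mainresultinformal} is met by the regularized PCP.

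Applying Theorem \ref{thm:mainresultinformal} then yields NP-hardness of $\gamma$-GapSVP$_p$ on $n^{O(1)}$-dimensional lattices with gap $\gamma = (1/s)^{\Theta(1)} = n^{\Theta(1)}$. Since the produced lattice dimension $N$ is itself a fixed polynomial in the SAT input size $n$, one can rewrite the gap as $\gamma = N^{c}$ for a single constant $c > 0$ that depends on $p$, obtained by composing the Sliding Scale soundness exponent, the regularization overhead, and the gap-amplification exponent of Theorem \ref{thm:mainresultinformal} (all of which depend on $p$).

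The main obstacle is purely bookkeeping: one must verify that Hirahara--Moshkovitz regularization does not inadvertently blow soundness from $n^{-\Theta(1)}$ into something super-polynomial in the post-regularization instance size, and then confirm that the constants hidden inside the three $\Theta(1)$ exponents above combine into a strictly positive $c$ once expressed in terms of $N$ rather than $n$. Both checks are routine extensions of the $p$-dependent parameter tracking already carried out for Corollary \ref{cor:NPhard}; no new technical ingredient beyond Theorem \ref{thm:mainresultinformal} is required.
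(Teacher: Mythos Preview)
Your proposal is correct and follows essentially the same route as the paper's own proof: regularize the Sliding Scale PCP via Hirahara--Moshkovitz, verify that the resulting parameters ($r = O(\log n)$, $q = O(1)$, $\vert\Sigma\vert = n^{\Theta(1)}$, $s = n^{-\Theta(1)}$, $2^r\vert\Sigma\vert^q = n^{O(1)}$) meet the hypotheses of the main theorem, and then rescale the lattice dimension. One small overstatement: of the three exponents you list, only the gap-amplification exponent $(1/2 - 1/p)/(25q)$ from Theorem~\ref{thm:mainresultformal} actually depends on $p$; the Sliding Scale soundness exponent and the regularization overhead do not.
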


Mukhopadhyay \cite{mukhopadhyay2022projection} recently proved a weaker version of this result: she showed that the \emph{Projection Games Conjecture} implies NP hardness of $n^{\Omega(1)}$-GapSVP$_\infty$, that is, for the special case that $p = \infty$. This alternative conjecture posits that the Sliding Scale Conjecture holds even for projection games, which are a special type of PCP in which every constraint depends on $q = 2$ variables, and additionally every constraint is of a particular form \cite{moshkovitz2012projection}. To get a sense of the relative strength of these two conjectures, observe that the Projection Games Conjecture implies the Sliding Scale Conjecture, but it is unknown whether the reverse implication holds. Indeed, the current best polynomial time reduction from SAT to a projection game achieves soundness $(\log n)^{-O(1)}$ \cite{dinur2014analytical}, as opposed to the $2^{-(\log n)^{0.99}}$ soundness for general PCPs.

\section{Preliminaries}
\label{sec:regPCP}

We use $[n]$ to denote the set $\{1, \ldots, n\}$. All vectors are row vectors unless stated otherwise. Assume that all quantities are rounded integers as needed.

\paragraph{Regularized PCPs.} Regularization is a classic tool for getting hardness of approximation results from the PCP theorem. The idea is to convert an arbitrary PCP into a new PCP where every variable in the proof is read with equal probability. More than three decades ago, Papadimitriou and Yannakakis showed that any PCP can be converted into a regularized PCP, but with the caveat that the final PCP has at best constant soundness \cite{papadimitriou1988optimization}. Recently, Hirahara and Moshkovitz gave an elegant technique which allows the final PCP to have soundness that is polynomially related to that of the starting PCP \cite{hirahara2023regularization}. For the convenience of the reader, we prove the following result in Appendix \ref{app:regular}.

\begin{theorem}[Simplified version of theorem in \cite{hirahara2023regularization}] \label{thm:regtechnique}
    Assume that we have a PCP verifier $\mathcal{V}$ which uses $r$ random bits to make $q$ queries to a proof over alphabet $\Sigma$ and has soundness $s$. Assume further that $s \leq \min(1/(3q), 1/\vert\Sigma\vert^c)$, where $c$ is a constant with $0 < c \leq 1$. Then we can construct a new PCP verifier $\mathcal{V}'$ from $\mathcal{V}$ deterministically in $\vert \mathcal{V}\vert^{O(1)}$ time, such that the following holds. $\mathcal{V}'$ uses $r + O(\log(1/s))$ random bits to make $q^{O(1)}$ queries to a proof over the same alphabet $\Sigma$ and has soundness $s^{\Theta(1)}$. Additionally, every proof symbol is read for exactly $d = (q/s)^{\Theta(1)}$ choices of randomness.
\end{theorem}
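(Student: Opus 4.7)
The plan is to transform $\mathcal{V}$ into a regularized verifier $\mathcal{V}'$ by replicating each proof symbol a carefully chosen number of times and having $\mathcal{V}'$ simulate $\mathcal{V}$ on randomly selected copies, losing only a polynomial factor in soundness. Consider $\mathcal{V}$'s bipartite \emph{constraint graph}, with random strings on one side and proof positions on the other, in which position $i$ has degree $k_i$ equal to the number of random strings that cause $\mathcal{V}$ to read it. The construction replaces each position $i$ with $d_i$ copies in the new proof, with $d_i$ chosen proportional to $k_i$, and augments each original random string $\rho$ with $O(\log(1/s))$ extra random bits $\sigma$ that specify, for each of $\mathcal{V}$'s $q$ queries, which copy to read (plus possibly a few consistency reads among nearby copies, which accounts for the $q^{O(1)}$ query count in the statement). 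A standard bipartite edge-coloring argument can arrange $\sigma$ so that every copy is read exactly $d = (q/s)^{\Theta(1)}$ times across all pairs $(\rho,\sigma)$, and so that the copy selected for each query is marginally uniform over the $d_i$ copies of its underlying position.

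Completeness is immediate: any satisfying proof $\pi$ for $\mathcal{V}$ lifts to a proof $\pi'$ for $\mathcal{V}'$ by setting every copy of position $i$ to $\pi_i$, which is accepted with probability $1$. Soundness is the heart of the argument. Given a proof $\pi'$ accepted by $\mathcal{V}'$ with probability $p$, the goal is to extract from $\pi'$ a proof $\pi$ for $\mathcal{V}$ that is accepted with probability $\geq p^{\Theta(1)}$; the soundness of $\mathcal{V}$ then forces $p \leq s^{\Theta(1)}$ as required. My decoding attempt would be: for each position $i$, either the $d_i$ copies of position $i$ mostly agree, in which case the plurality value at $i$ yields a proof $\pi$ on which $\mathcal{V}$ accepts at a rate close to $p$, or they disagree significantly, in which case one exploits the hypothesis $s \leq 1/|\Sigma|^c$ to bound the list of ``popular'' values per position by $|\Sigma|^{O(1)} \leq (1/s)^{O(1)}$, enumerate all candidate proofs obtained by choosing one popular value per position, and argue that $\mathcal{V}$ must accept at least one such candidate with probability $\geq p^{\Theta(1)}$.

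The hardest step will be quantifying the soundness loss cleanly. Naive plurality decoding can fail because a single query landing on a minority copy can flip acceptance, and the fraction of minority copies at a position can be as large as $(|\Sigma|-1)/|\Sigma|$. The two soundness hypotheses in the statement are precisely what enable the argument to go through: the hypothesis $s \leq 1/(3q)$ lets a union bound over $q$ queries absorb the per-query minority-copy loss, while $s \leq 1/|\Sigma|^c$ keeps the list of popular values at each position to size $|\Sigma|^{O(1)}$, so that after appropriate averaging (or random-projection) over the list the total loss is still only a polynomial in $1/s$. Once the decoding is in place, checking the randomness budget $r + O(\log(1/s))$, the query complexity $q^{O(1)}$, and the regularity claim that each copy is read exactly $d$ times reduces to routine counting from the explicit choice of $d_i \propto k_i$.
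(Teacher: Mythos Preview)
Your high-level picture---replicate proof symbols, have the new verifier read copies, and then decode back to a proof for $\mathcal{V}$---is the right shape, but the proposal is missing the structural ingredient that makes the soundness go through. The paper's construction (following Hirahara--Moshkovitz) does not just pick a uniformly random copy via extra coins $\sigma$ and add ad hoc ``nearby'' consistency checks. Instead, after first duplicating every constraint $1/s$ times (a step you omit, and which is needed so that every variable has degree at least $1/s$), it builds for each original variable $x$ of degree $d(x)$ an explicit bipartite $(\delta,\beta)$-\emph{disperser} $G_x=([d(x)],[B],E)$ with left degree $w=\lceil 6q/c\rceil$, creates $B$ copies of $x$, and has each constraint that used to read $x$ instead read the $w$ copies specified by its left-vertex's neighborhood in $G_x$, checking equality among them. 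The disperser guarantees that any value $\alpha$ held by at most a $\beta=s^{1/(2q)}$ fraction of the copies of $x$ can be seen consistently by at most a $3\beta^{w}$ fraction of the constraints touching $x$. That amplification from $\beta$ to $\beta^{w}$ is what lets one discard all but the $(1/s)^{1/(2q)}$ most popular values at each position while losing only an $s$ fraction of satisfied constraints.

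Your edge-coloring achieves regularity but gives no dispersion, and without dispersion your soundness decoding fails. ``Enumerate all candidate proofs obtained by choosing one popular value per position'' is exponential in the proof length; the correct move is to pick one popular value per position \emph{at random} and use linearity of expectation, but that argument only bounds the loss by $(1/s)^{1/(2q)\cdot q}=1/\sqrt{s}$ once you already know most satisfied constraints read only popular values---and establishing that is precisely the disperser's job. With uniform copy selection plus only local (``nearby'') equality tests, an adversary can partition the copies of $x$ into equal-sized blocks, assign a distinct symbol to each block, and arrange that each constraint's consistency reads fall inside a single block; then every constraint sees consistent copies yet no value is popular, and neither plurality nor list decoding recovers a good proof for $\mathcal{V}$. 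So the step you flag as ``hardest'' is not merely hard to quantify---it genuinely requires the expander/disperser machinery that your sketch leaves out.
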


\paragraph{Hölder's Inequality.}
This inequality generalizes the Cauchy-Schwarz inequality.

\begin{theorem}[\cite{holder1889ueber}] \label{thm:Holder}
    Let $\mat u, \mat v \in \mathbb{Z}^n$, and let $p, q \geq 1$ satisfy $1/p + 1/q = 1$. Then
    \[\|\mat u \odot \mat v\|_1 \leq \|\mat u\|_p \|\mat v \|_q,\]
    where $\odot$ denotes the component-wise product.
\end{theorem}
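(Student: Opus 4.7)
The plan is to reduce Hölder's inequality to the pointwise Young inequality $ab \leq a^p/p + b^q/q$ for nonnegative reals and conjugate exponents, then exploit the homogeneity of both sides of the claimed bound to normalize the vectors.

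First I would dispose of the degenerate case: if $\|\mathbf{u}\|_p = 0$ or $\|\mathbf{v}\|_q = 0$, then one of the vectors is identically zero, so $\mathbf{u}\odot\mathbf{v} = \mathbf{0}$ and both sides of the inequality vanish. So I would assume from now on that both norms are strictly positive.

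Next I would prove Young's inequality: for all $a, b \geq 0$,
\[ab \leq \frac{a^p}{p} + \frac{b^q}{q}.\]
When $a = 0$ or $b = 0$ this is immediate. Otherwise, write $ab = \exp\left(\frac{1}{p}\log a^p + \frac{1}{q}\log b^q\right)$ and apply the convexity of $\exp$ (equivalently, weighted AM--GM with weights $1/p$ and $1/q$, which sum to $1$) to obtain
\[\exp\left(\tfrac{1}{p}\log a^p + \tfrac{1}{q}\log b^q\right) \leq \tfrac{1}{p}\exp(\log a^p) + \tfrac{1}{q}\exp(\log b^q) = \tfrac{a^p}{p} + \tfrac{b^q}{q}.\]

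Finally I would apply the pointwise bound after normalizing. For each $i \in [n]$, set $a_i \coloneqq |\mathbf{u}_i|/\|\mathbf{u}\|_p$ and $b_i \coloneqq |\mathbf{v}_i|/\|\mathbf{v}\|_q$. Then by construction $\sum_i a_i^p = 1$ and $\sum_i b_i^q = 1$, so summing Young's inequality over all $i$ gives
\[\sum_{i=1}^n a_i b_i \;\leq\; \frac{1}{p}\sum_{i=1}^n a_i^p + \frac{1}{q}\sum_{i=1}^n b_i^q \;=\; \frac{1}{p} + \frac{1}{q} \;=\; 1.\]
Multiplying through by $\|\mathbf{u}\|_p \|\mathbf{v}\|_q$ yields $\sum_{i=1}^n |\mathbf{u}_i \mathbf{v}_i| \leq \|\mathbf{u}\|_p \|\mathbf{v}\|_q$, which is exactly $\|\mathbf{u}\odot\mathbf{v}\|_1 \leq \|\mathbf{u}\|_p\|\mathbf{v}\|_q$. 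There is no substantive obstacle in this proof; the only nontrivial ingredient is Young's inequality, which is itself a one-line consequence of the convexity of $\exp$. The only care needed is to justify the normalization step, which is legitimate because both sides of the target inequality are absolutely $1$-homogeneous in $\mathbf{u}$ and in $\mathbf{v}$.
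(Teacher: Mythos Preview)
Your proof is correct and is the standard argument via Young's inequality and normalization. Note, however, that the paper does not actually prove this theorem: it is stated as a classical result with a citation to H\"older's original 1889 paper, and is simply used as a black box to derive Corollary~\ref{cor:Holder}. So there is no ``paper's own proof'' to compare against.
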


We make extensive use of the following corollary, the proof of which is deferred to Appendix \ref{app:Holder}.

\begin{corollary} \label{cor:Holder}
    For all $\mat w \in \mathbb{Z}^n$ and $p \geq 2$ (including $p = \infty$),
    \[\|\mat w\|_p \geq n^{1/p - 1/2}\|\mat w\|_2.\]
\end{corollary}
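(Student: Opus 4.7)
The plan is to derive this as a direct consequence of Hölder's inequality (Theorem \ref{thm:Holder}), by applying it to the coordinate-wise square of $\mat w$ paired against the all-ones vector with a carefully chosen pair of conjugate exponents. Specifically, I would set $\mat u = (|w_1|^2, \ldots, |w_n|^2)$ and $\mat v = (1, 1, \ldots, 1)$, so that $\|\mat u \odot \mat v\|_1 = \sum_i |w_i|^2 = \|\mat w\|_2^2$. The free parameters are the Hölder exponents, and the natural choice is $p' = p/2$ and $q' = p/(p-2)$, which are valid conjugates (i.e., $1/p' + 1/q' = 1$) precisely because $p > 2$.

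With this choice, $\|\mat u\|_{p'} = \bigl(\sum_i |w_i|^p\bigr)^{2/p} = \|\mat w\|_p^2$ and $\|\mat v\|_{q'} = n^{1/q'} = n^{(p-2)/p} = n^{1 - 2/p}$. Plugging into Hölder gives $\|\mat w\|_2^2 \leq \|\mat w\|_p^2 \cdot n^{1 - 2/p}$, and taking square roots and rearranging yields $\|\mat w\|_p \geq n^{1/p - 1/2}\|\mat w\|_2$, as desired. (Note that $1/p - 1/2 < 0$, so the right-hand side is indeed smaller than $\|\mat w\|_2$, consistent with the usual inclusion $\ell_2 \supseteq \ell_p$ for $p > 2$ on finite-dimensional spaces.)

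The case $p = \infty$ needs a separate line of reasoning, but is even simpler: one observes directly that $\|\mat w\|_2^2 = \sum_i |w_i|^2 \leq n \cdot \max_i |w_i|^2 = n \cdot \|\mat w\|_\infty^2$, which rearranges to $\|\mat w\|_\infty \geq n^{-1/2} \|\mat w\|_2 = n^{1/\infty - 1/2}\|\mat w\|_2$ under the convention $1/\infty = 0$. Alternatively, one can obtain this as the limit of the finite-$p$ inequality as $p \to \infty$.

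I expect no real obstacle here; this is a textbook application of Hölder's inequality, and the only genuine bookkeeping is verifying that the exponents $p/2$ and $p/(p-2)$ are conjugate and both at least $1$ (which holds exactly in the regime $p > 2$ for which the corollary is stated). The main reason to record the corollary at all is that the authors presumably need to invoke it repeatedly in the reduction, where the scaling factor $n^{1/p - 1/2}$ converts $\ell_2$-based gadget arguments (which are natural for Vandermonde and Hadamard matrices) into $\ell_p$-norm statements about the lattice vectors produced.
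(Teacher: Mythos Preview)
Your proposal is correct and is essentially identical to the paper's own proof: both apply H\"older's inequality to the coordinate-wise square of $\mat w$ against the all-ones vector with conjugate exponents $p/2$ and $p/(p-2)$, and both handle $p=\infty$ separately via the trivial bound $\|\mat w\|_2^2 \le n\|\mat w\|_\infty^2$ (the paper phrases this as a proof by contradiction, but it is the same computation).
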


\section{Setting up the Gadgets} \label{sec:setup}

In this section, we give an overview of our tools and briefly discuss how we plan to use them to reduce from a regularized PCP to the shortest vector problem.

\subsection{Viewing PCPs as CSPs} \label{sec:cspperspective}

A helpful way to think of PCPs, and indeed the viewpoint we will adopt for our reduction, is to interpret them as showing that a certain constraint satisfaction problem (CSP) is hard to approximate. We think of every choice of randomness as corresponding to a distinct constraint in the CSP, and every position in the proof as a variable in the CSP. There are exactly $M = 2^r$ choices of randomness, so our CSP will have exactly $M$ constraints. For every choice of randomness, there is an explicit list of assignments for the relevant proof symbols that would cause the verifier to accept; these become the satisfying assignments for the corresponding constraint in the CSP. To emphasize the CSP viewpoint, we refer to a PCP with $M = 2^r$ choices of randomness that reads a proof of length $N$ as ``a PCP having $M$ constraints and $N$ variables.'' Assuming the PCP is regular, every variable will appear in exactly $d$ constraints and every constraint will depend on exactly $q$ variables, so we have $Nd = Mq$.

If the SAT instance from which we constructed the PCP was satisfiable, then by definition there must exist a proof which makes the verifier accept with probability 1. From our CSP viewpoint, this corresponds to an assignment to the variables such that every constraint is satisfied. On the other hand, if the original SAT instance was not satisfiable, then for any purported proof the PCP verifier will accept with probability at most the soundness parameter $s$. From the CSP perspective, this means that any assignment to the variables will satisfy at most an $s$ fraction of the constraints.

\paragraph{Associated Graphs and Matrices.} Below, we define a graph which corresponds to the CSP. The graph will encode which constraints are incident to which variables, as well as which variable assignments satisfy each constraint.

\begin{definition}[Label Extended Factor Graph]
    Consider a PCP with $N$ variables, $M$ constraints, and alphabet $\Sigma$, where each constraint depends on $q$ variables. Its label extended factor graph is the bipartite graph $([M] \times \Sigma^q, [N] \times \Sigma, E)$, where $((t, \alpha_1, \ldots \alpha_q), (x, \alpha)) \in E$ iff
    \begin{enumerate}
        \item The (ordered) assignment $\alpha_1, \ldots \alpha_q$ is a satisfying assignment for constraint $t$,
        \item Constraint $t$ contains the variable $x$, and
        \item The (ordered) assignment $\alpha_1, \ldots \alpha_q$ indicates an assignment of $\alpha$ to the variable $x$.
    \end{enumerate}
\end{definition}

In other words, each left vertex $(t, \alpha_1, \ldots, \alpha_q)$ represents a \emph{candidate} assignment for constraint $t$, and we list every such candidate. Similarly, each right vertex $(x, \alpha)$ represents a (variable, assignment) tuple. The left vertices $(t, \alpha_1, \ldots, \alpha_q)$ which do not correspond to a satisfying assignment for constraint $t$ will have no edges incident to them. (The reason we keep these vertices is simply to ensure the label extended factor graph has exactly $M\vert\Sigma\vert^q$ left vertices and $N\vert\Sigma\vert$ right vertices.)

We can also think of the label extended factor graph in terms of its indicator matrix:

\begin{definition}[Indicator Matrix]
    The indicator matrix $\mat S \in \{0, 1\}^{M\vert\Sigma\vert^q \times N\vert\Sigma\vert}$ for a given PCP has \\ $\mat S_{(t, \alpha_1, \ldots, \alpha_q), (x, \alpha)} = 1$ iff $((t, \alpha_1, \ldots, \alpha_q), (x, \alpha))$ is an edge in the PCP's label extended factor graph.
\end{definition}

Below, we give some toy examples of indicator matrices.\\

\noindent\begin{minipage}{0.23\textwidth}
\includegraphics[width=\linewidth]{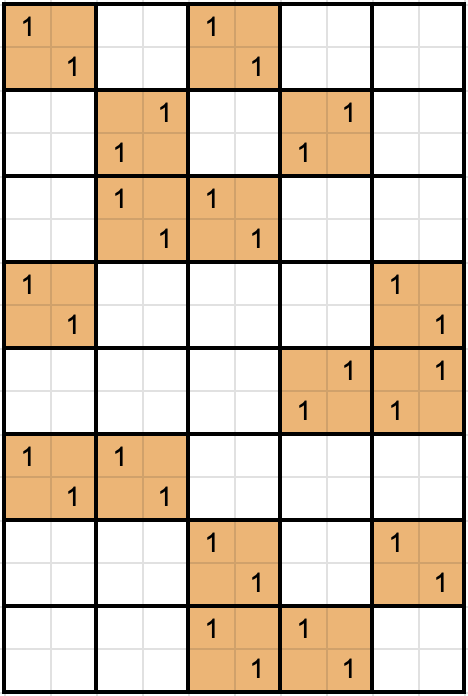}
\end{minipage}%
\hfill%
\begin{minipage}{0.02\textwidth}
\end{minipage}%
\hfill%
\begin{minipage}{0.28\textwidth}
The indicator matrix for a PCP that has $M = 8$ constraints, $N = 5$ variables, $q = 2$ variables read by each constraint, and $\vert\Sigma\vert = 2$. For ease of viewing, we omit rows that are entirely zero, and we only write the ``1'' entries. Notice that this PCP is satisfied by the assignment that corresponds to picking the left column for each variable.
\end{minipage}%
\hfill%
\begin{minipage}{0.02\textwidth}
\end{minipage}%
\hfill%
\begin{minipage}{0.23\textwidth}
\includegraphics[width=\linewidth]{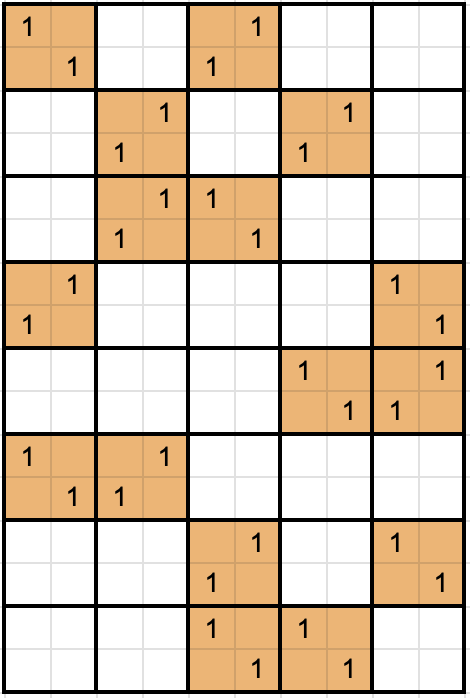}
\end{minipage}%
\hfill%
\begin{minipage}{0.02\textwidth}
\end{minipage}%
\hfill%
\begin{minipage}{0.18\textwidth}
The indicator matrix for a PCP with the same parameters as the one on the left, but which is unsatisfiable.
\end{minipage}%
\hfill%
\begin{minipage}{0.02\textwidth}
\end{minipage}%
\hfill

\subsection{Matrix Gadgets}

We form the set of basis vectors for our SVP problem by taking the indicator matrix $\mat S$ for a regularized PCP, and then augmenting $\mat S$ using different types of gadgets. (A modified copy of the matrix $\mat S$ will directly appear within the final set of basis vectors.) In this section we introduce the gadgets themselves, with a particular focus on how they can help us ensure soundness of the reduction.

\paragraph{Reduced Vandermonde Matrices.}
The first gadget is a matrix where all entries are polynomially bounded integers, and every row-induced square submatrix has full rank.

\begin{definition}[Reduced Vandermonde Matrix]
    Let $a > b$ with $a$ prime. We define an $(a, b)$ reduced Vandermonde matrix $\mat V \in \mathbb{Z}^{(a - 1) \times b}$ as $\mat V_{i, j} = i^{j-1} \mod a$.
\end{definition}

We have the following property:

\begin{lemma}
    Every $b \times b$ submatrix of an $(a, b)$ reduced Vandermonde matrix is of full rank.
\end{lemma}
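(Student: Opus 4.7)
The plan is to recognize every row-induced $b \times b$ submatrix as a classical Vandermonde matrix evaluated at $b$ distinct nodes of $\mathbb{F}_a$, and then argue nonvanishing of its determinant by reducing modulo the prime $a$.

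Concretely, I would fix any $b$ row indices $1 \le i_1 < i_2 < \cdots < i_b \le a - 1$ and form the submatrix $\mat W \in \mathbb{Z}^{b \times b}$ with $\mat W_{k, j} = i_k^{j - 1} \bmod a$. The subtlety is that $\mat W$ lives over $\mathbb{Z}$ with entries reduced mod $a$, so the Vandermonde product formula does not apply to $\det(\mat W)$ directly over $\mathbb{Z}$. Instead, I would pass to the quotient ring $\mathbb{F}_a = \mathbb{Z}/a\mathbb{Z}$, where the reduction is identity on the entries, and observe that
\[
\det(\mat W) \equiv \prod_{1 \le k < l \le b} (i_l - i_k) \pmod{a},
\]
by the standard Vandermonde determinant identity, which is valid over any commutative ring.

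Next I would check that each factor $i_l - i_k$ is a unit in $\mathbb{F}_a$. Since $1 \le i_k < i_l \le a - 1$, we have $0 < i_l - i_k < a$, so $i_l - i_k \not\equiv 0 \pmod{a}$. Because $a$ is prime, $\mathbb{F}_a$ is a field, so the product of these nonzero elements is itself nonzero in $\mathbb{F}_a$. Hence $\det(\mat W) \not\equiv 0 \pmod{a}$, which in particular implies $\det(\mat W) \neq 0$ as an integer, so $\mat W$ has full rank over $\mathbb{Q}$ (equivalently, over $\mathbb{Z}$).

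There is no serious obstacle here; the only point that requires any care is the one flagged above, namely that one cannot apply the Vandermonde formula to $\mat W$ directly over $\mathbb{Z}$ because the matrix entries have already been reduced mod $a$, so the argument must be carried out in $\mathbb{F}_a$ and then lifted back to conclude nonvanishing over $\mathbb{Z}$. The primality of $a$ is essential for this lift: without it, $\mathbb{Z}/a\mathbb{Z}$ would not be an integral domain and a product of nonzero differences could vanish.
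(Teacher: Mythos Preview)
Your proposal is correct and follows essentially the same approach as the paper: work over $\mathbb{F}_a$, use the Vandermonde determinant formula to get nonvanishing there, and then observe that nonvanishing modulo $a$ implies nonvanishing over $\mathbb{Z}$. The paper merely sketches this (``casting from $\mathbb{F}_a$ to $\mathbb{Z}$ cannot introduce new linear dependencies''), while you spell out the determinant computation and the role of primality, but the underlying argument is identical.
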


\begin{proof}
    Well-known; see for example \cite{horn1994topics}. The idea is that an $(a - 1) \times b$ Vandermonde matrix over $\mathbb{F}_a$ has every $b \times b$ submatrix being of full rank, and casting from $\mathbb{F}_a$ to $\mathbb{Z}$ cannot introduce new linear dependencies.
\end{proof}

Thus, any linear combination of the rows which sums to zero must either (i) have all coefficients equal to zero, or (ii) have a large fraction of the coefficients being nonzero. More formally,

\begin{corollary} \label{cor:Vandermonde}
    Let $\mat v \in \mathbb{Z}^{a-1}$ be a nonzero vector. If $\mat v \mat V = \mat 0$, then $\|\mat v\|_0 > b$.
\end{corollary}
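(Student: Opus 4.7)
The plan is to prove Corollary \ref{cor:Vandermonde} by contrapositive: I will show that if $\mat v \in \mathbb{Z}^{a-1}$ is a vector with $\|\mat v\|_0 \le b$ and $\mat v \mat V = \mat 0$, then $\mat v$ must be the zero vector. This lets me leverage the preceding lemma (which says every $b \times b$ submatrix of $\mat V$ has full rank) in essentially the most direct way possible.

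First I would let $I = \{i : v_i \neq 0\}$ be the support of $\mat v$, so by assumption $|I| \le b$. Writing out the matrix equation coordinate by coordinate, $\mat v \mat V = \mat 0$ says exactly that $\sum_{i \in I} v_i\,\mat V_{i, \cdot} = \mat 0$, where $\mat V_{i, \cdot}$ denotes the $i$-th row of $\mat V$ viewed as a vector in $\mathbb{Z}^b$. Thus, the rows of $\mat V$ indexed by $I$ admit a linear dependence with coefficients $\{v_i\}_{i \in I}$.

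Next, I would extend $I$ to an index set $J \supseteq I$ of size exactly $b$ by adding arbitrary row indices if $|I| < b$. This is possible because $\mat V$ has $a - 1 \ge b$ rows (we have $a > b$ with $a$ prime, so $a \ge b + 1$). The rows $\{\mat V_{j, \cdot}\}_{j \in J}$ form a $b \times b$ submatrix of $\mat V$, which by the previous lemma is of full rank, so these $b$ row vectors are linearly independent in $\mathbb{Q}^b$. Any subset of a linearly independent set is linearly independent, so $\{\mat V_{i, \cdot}\}_{i \in I}$ is also linearly independent. The only linear combination of these rows that yields $\mat 0$ is the trivial one, forcing $v_i = 0$ for every $i \in I$. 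But $I$ is the support of $\mat v$, so this means $\mat v = \mat 0$, contradicting the hypothesis that $\mat v$ is nonzero.

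There is no real obstacle here: the corollary is a direct translation of the full-rank-submatrix lemma into a statement about supports, and the only subtlety is the trivial bookkeeping needed to pad $I$ up to size $b$ so we can cite the lemma verbatim. I expect the entire proof to fit in a few lines.
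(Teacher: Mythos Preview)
Your proposal is correct and is exactly the argument the paper has in mind: the corollary is stated immediately after the full-rank lemma with only a one-line informal justification, and your contrapositive (restrict to the support, pad to $b$ rows, invoke full rank) is precisely how one unpacks that. There is nothing to add or change.
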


Reduced Vandermonde matrices are used in two different parts of our construction:
\begin{enumerate}
    \item We concatenate a single large reduced Vandermonde matrix (with its entries scaled up by a large multiplicative factor) to the matrix of basis vectors for our lattice. This enforces that any short nonzero lattice vector must correspond to a linear combination of basis vectors which entirely cancels out the Vandermonde component. With the right size parameters, we can ensure that the number of nonzero coefficients in any such linear combination is nearly the same as the number of constraints in the PCP.
    \item We insert many relatively small reduced Vandermonde matrices (again with their entries scaled up by a large multiplicative factor) into the copy of the indicator matrix $\mat S$ contained within the lattice basis vectors. The purpose of this transformation is a bit more subtle; roughly speaking, it enforces that any short lattice vector corresponds to a linear combination of basis vectors that is at least partially ``self-consistent'' with respect to the variable assignments. See Section \ref{sec:actualconstruct} for details.
\end{enumerate}

\paragraph{Hadamard Matrices.}
The second gadget is a matrix where every pair of rows is orthogonal (in the $\ell_2$ norm), and all of the entries are integers of the same magnitude. These conditions will allow us to apply manipulations based on the relationship between the $\ell_2$ norm and the $\ell_p$ norm with $p > 2$.

\begin{definition}[Hadamard Matrix]
    Define $\mat H_0 \coloneqq \left[1\right]$, and inductively set
    \[\mat H_{i + 1} \coloneqq \begin{bmatrix}
        \mat H_i & -\mat H_i \\
        \mat H_i & \mat H_i
    \end{bmatrix}\]
    For $n$ a power of two, the $n \times n$ Hadamard matrix is $\mat H_{\log_2n}$.
\end{definition}

In our construction, we use a \emph{block diagonal} set of Hadamard matrices concatenated to the set of basis vectors. To gain some intuition, let $\mat Y \coloneqq \mat I_m \otimes \mat H_{\log_2 n}$ and consider the $\ell_p$ norm of linear combinations $\mat v \mat Y$.

\begin{enumerate}
    \item For two vectors $\mat u, \mat v$ with all entries coming from the set $\{-1, 0, 1\}$, if $\|\mat u\|_0 > \|\mat v\|_0$, then $\|\mat u \mat Y\|_2 > \|\mat v\mat Y\|_2$. This just follows from the orthogonality of the rows in $\mat Y$. In fact, we can make similar statements for $\ell_p$ with $p > 2$, assuming certain properties about $\mat u$ and $\mat v$. With the right size parameters, the Hadamard gadgets will allow us to argue that any short nonzero vector in our final lattice must come from a linear combination of basis vectors where the number of nonzero coefficients is not too much larger than the number of constraints in the PCP. (See Section \ref{sec:actualconstruct} for details.) This complements the effect of the reduced Vandermonde matrices discussed earlier.
    \item In higher $\ell_p$ norms, $\mat Y$ exhibits a certain ``anti-concentration'' property. Consider two vectors $\mat u$ and $\mat v$ with all entries coming from the set $\{-1, 0, 1\}$, such that $\|\mat u\|_0 = \|\mat v\|_0$. If the nonzero entries in $\mat u$ correspond to a relatively small number of Hadamard blocks, but the nonzero entries in $\mat v$ are relatively spread out across the blocks, then we will have $\|\mat u\mat Y\|_p \gg \|\mat v\mat Y\|_p$ for all $p > 2$. (A concrete analysis of this property is given in Lemma \ref{lem:numbertests}.) This is useful because we can force short lattice vectors to come from a linear combination of basis vectors that is spread across many of the different PCP constraints.
\end{enumerate}

Below, we give an illustration showing the anti-concentration property. \\

\noindent
\begin{minipage}{0.1\textwidth}
\end{minipage}%
\hfill%
\begin{minipage}{0.3\textwidth}
\includegraphics[width=\linewidth]{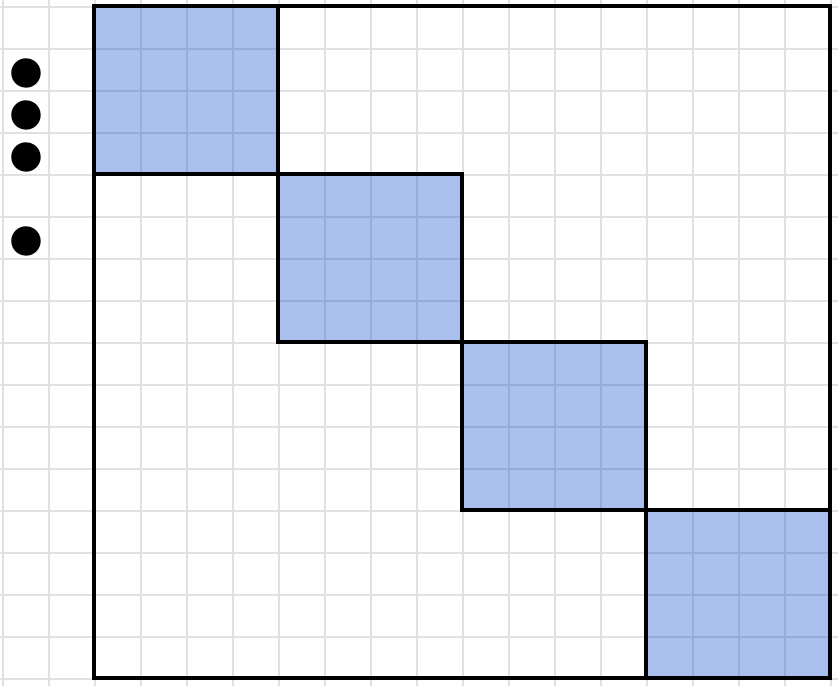}
\\\\
\includegraphics[width=\linewidth]{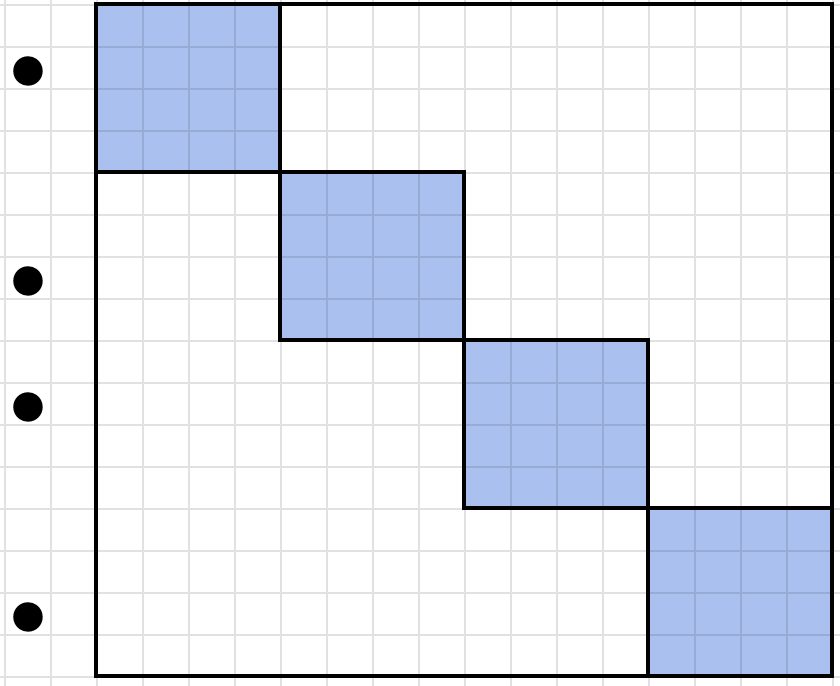}
\end{minipage}%
\hfill%
\begin{minipage}{0.5\textwidth}
(Top) The linear combination corresponding to $\mat u$, with $\|\mat u\|_0 = 4$. (Bottom) The linear combination corresponding to $\mat v$, with $\|\mat v\|_0 = 4$. In each diagram, the rows of the matrix $\mat Y$ participating in the linear combination are indicated by black dots.\\\\
We have that $\mat u \mat Y$ is a vector containing mostly zeros, but the nonzero entries will (on average) be much greater than 1. In higher $\ell_p$ norms, the length contribution from these entries is amplified significantly. In contrast, $\mat v\mat Y$ is a vector whose entries are all either $-1$ or $1$; while the number of nonzero entries in $\mat v \mat Y$ is much larger than in $\mat u \mat Y$, all of them are small, so $\ell_p$ length amplification doesn't occur.
\end{minipage}%
\hfill%
\begin{minipage}{0.1\textwidth}
\end{minipage}%
\hfill

\section{Reducing from a PCP Instance to an SVP Instance}
\label{sec:actualconstruct}

We start with a regularized PCP that has $M$ constraints and $N$ variables over alphabet $\Sigma$, where each constraint depends on exactly $q$ variables, each variable appears in exactly $d$ constraints, and the soundness parameter is $s$. Assume that $q \leq O(\log M)$, that $(1/s)^{1/q} \geq (\log M)^{\omega(1)}$, and that the alphabet size $\vert \Sigma \vert$ is a power of two. Two additional properties hold automatically:
\begin{enumerate}
    \item $q$ must be at least 2. Otherwise, it is trivial to find a maximally satisfying assignment for the PCP.
    \item $d$ must be at least $1/(sq) \geq (\log M)^{\omega(1)}$. This is because we can always find a set of $M/(dq)$ variable-disjoint constraints, and these are mutually satisfied by taking a locally satisfying assignment for each. If $d < 1/(sq)$, then the number of satisfied constraints is strictly more than $M/(1/s) = M s$, violating soundness.
\end{enumerate}
Throughout, fix a prime $a = \Theta((M\vert\Sigma\vert^q)^2)$.

We construct the matrix of basis vectors for a lattice by applying two procedures, each of which introduces a distinct set of gadgets. (The construction is agnostic to the specific choice of $\ell_p$ norm so long as $p > 2$.) We encourage the reader to adopt the following perspective: Each individual gadget converts many of the previously ``short problematic'' vectors into vectors with a large $\ell_p$ norm. When all gadgets are considered together, we show that we can map all of the remaining short vectors to assignments for the PCP which violate soundness. We will be able to go in the other direction, too: a satisfying assignment for the PCP can be mapped to a nonzero vector with short $\ell_p$ norm in the final lattice.

\paragraph{Manipulating the Indicator Matrix.}

Our first procedure constructs a matrix $\mat P$ based on the indicator matrix for the PCP by inserting reduced Vandermonde matrices in place of each nonzero entry.

\begin{enumerate}
    \item Let $\mat S \in \{0, 1\}^{M|\Sigma|^q \times N|\Sigma|}$ be the indicator matrix for the PCP. Assume that $\vert \Sigma\vert$ is a power of two.
    \item Construct matrix $\mat P \in \mathbb{Z}^{M|\Sigma|^q \times Mq|\Sigma|/(\log^3 M)}$ as follows. Start with an empty matrix $\mat P$, and for each column $\mat c$ of $\mat S$:
    \begin{enumerate}
        \item Let $\mat V \in \mathbb{Z}^{(a-1) \times d/(\log^3 M)}$ be an $(a, d/(\log^3 M))$ reduced Vandermonde matrix.
        \item Define a matrix $\mat V' \in \mathbb{Z}^{M\vert\Sigma\vert^q \times d/(\log^3 M)}$ as follows:
        \begin{enumerate}
            \item If $\mat c_{(t, \alpha_1, \ldots, \alpha_q)} = 0$, then $\mat V'_{(t, \alpha_1, \ldots, \alpha_q)} = \mat 0^{d/(\log^3 M)}$.
            \item If $\mat c_{(t, \alpha_1, \ldots, \alpha_q)} = 1$, we set $\mat V'_{(t, \alpha_1, \ldots, \alpha_q)}$ to be a distinct row\footnote{We only need $\mat V'_{(t, \alpha_1, \ldots, \alpha_q)}$ to be distinct with respect to the other rows in the current matrix $\mat V'$, not with respect to the rows in matrices $\mat V'$ constructed previously.} of $\mat V$, scaled up by a multiplicative factor of $(M\vert\Sigma\vert^q/s)^2$.
        \end{enumerate}
        \item Insert $\mat V'$ as a new set of columns in $\mat P$.
    \end{enumerate}
    
    Recall that because the PCP is regular, we must have $Nd = Mq$. So the final width of $\mat P$ is indeed $N\vert\Sigma\vert \cdot d/(\log^3 M) = Mq\vert\Sigma\vert/(\log^3 M)$. The construction is well-defined because $d \geq (\log M)^{\omega(1)}$, meaning $d/(\log^3 M) \geq (\log M)^{\omega(1)}$.\\
\end{enumerate}

\noindent\begin{minipage}{0.2\textwidth}
\includegraphics[width=\linewidth]{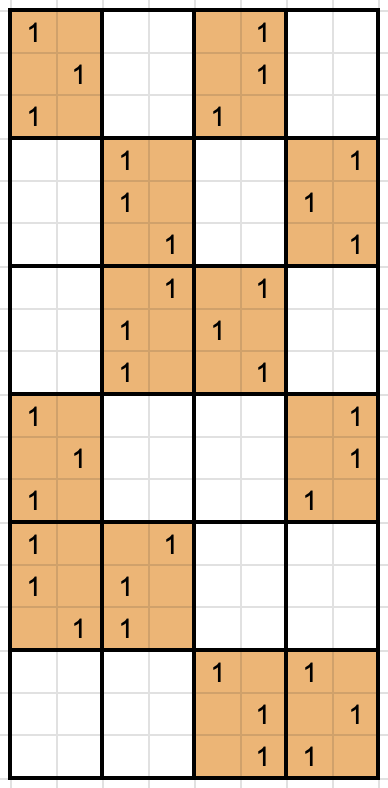}
\end{minipage}%
\hfill%
\begin{minipage}{0.02\textwidth}
\end{minipage}%
\hfill%
\begin{minipage}{0.45\textwidth}
\includegraphics[width=\linewidth]{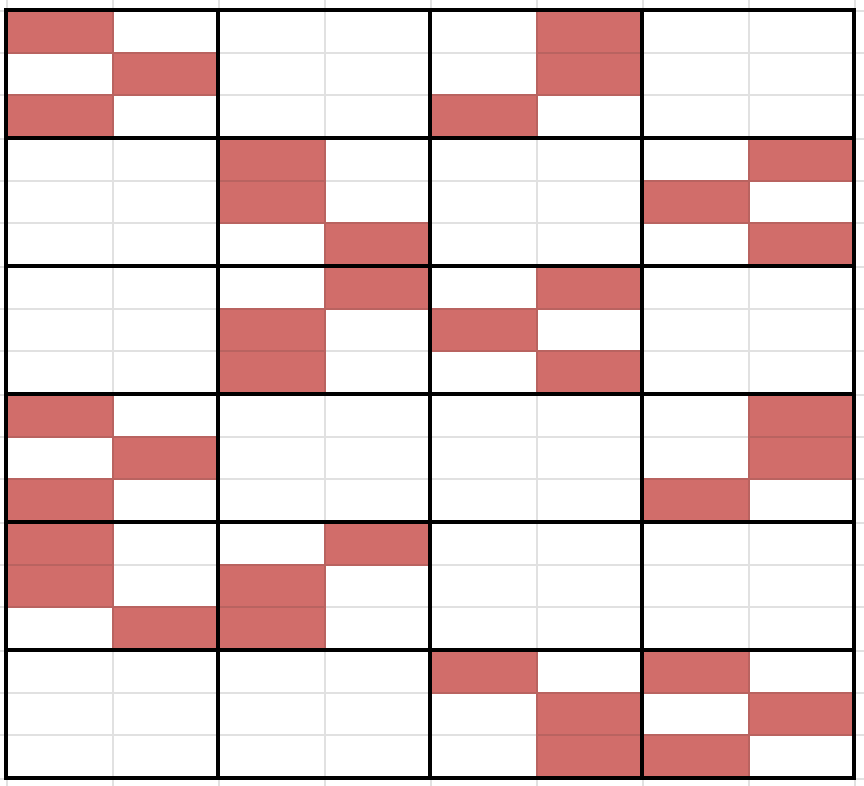}
\end{minipage}%
\hfill%
\begin{minipage}{0.02\textwidth}
\end{minipage}%
\hfill%
\begin{minipage}{0.28\textwidth}
(Left) The indicator matrix $\mat S$ for a PCP with $N = 4$, $M = 6$, $\vert \Sigma\vert = 2$, and $q = 2$. As before, we omit zero rows and only write the ``1'' entries. (Right) The corresponding matrix $\mat P$. Every red block is a row taken from a reduced Vandermonde matrix.
\end{minipage}%
\hfill

\paragraph{Why is this useful?}
Consider an arbitrary vector $\mat v \in \mathbb{Z}^{M\vert\Sigma\vert^q}$, and let $\mathcal{S}$ be the set of rows from $\mat P$ having a nonzero coefficient in the linear combination $\mat v \mat P$. Because every Vandermonde entry is scaled up by a ``very large'' multiplicative factor, we have that $\mat v \mat P$ is either zero or has a large $\ell_p$ norm, for any choice of $p \geq 1$. In the case that $\mat v \mat P$ is zero, an application of Corollary \ref{cor:Vandermonde} on the reduced Vandermonde gadgets shows that a certain gap property holds: \emph{every} set of columns corresponding to a (variable, assignment) tuple in $\mat P$ satisfies one of the following properties.
\begin{enumerate}
    \item None of the basis vectors in $\mathcal{S}$ have a nonzero entry in that set of columns.
    \item At least $d/(\log^3 M)$ of the basis vectors in $\mathcal{S}$ have nonzero entries in that set of columns.
\end{enumerate}

We show in Lemma \ref{lem:numberpairs} that this gap property ensures any short lattice vector resulting from a linear combination of (say) $M$ basis vectors cannot have those basis vectors ``indicating'' more than $f \cdot N$ (variable, assignment) tuples, for some small slack factor $f$. The conversion from a bound in terms of $M$ to a bound in terms of $N$ makes critical use of the fact that the PCP is regular.

\paragraph{Appending More Gadgets.}
In the second procedure, we append Hadamard matrices and another reduced Vandermonde matrix to $\mat P$. The total number of basis vectors does not change; instead, the width of each basis vector increases significantly.

\begin{enumerate}
    \item Let $\mat X \in \mathbb{Z}^{M\vert\Sigma\vert^q \times M/(\log^3 M)}$ be the first $M\vert\Sigma\vert^q$ rows of an $(a, M/(\log^3 M))$ reduced Vandermonde matrix, but where each entry is scaled up by a multiplicative factor of $(M\vert\Sigma\vert^q/s)^2$. As before, this is well-defined because $d \geq (\log M)^{\omega(1)}$, meaning $d/(\log^3 M) \geq (\log M)^{\omega(1)}$.
    \item Let $\mat H \in \{\pm 1\}^{\vert\Sigma\vert^q \times \vert\Sigma\vert^q}$ be a Hadamard matrix. Define $\mat Y = \mat I_M \otimes \mat H$, where $\mat I_M$ is the $M \times M$ identity matrix and $\otimes$ is the Kronecker product. Each copy of the Hadamard matrix should align with a set of rows in $\mat P$ representing the candidate assignments for a single constraint.
    \item Define $\mat G \coloneqq \big[\mat P \| \mat X \| \mat Y\big]$, and then delete all rows of $\mat G$ which are entirely zero when restricted to the submatrix $\mat P$. All remaining rows in $\mat G$ are those which correspond to a \emph{satisfying assignment} for some constraint. The height $M'$ of $\mat G$ is upper bounded as $M' \leq M\vert\Sigma\vert^q$, and the width $N'$ of $\mat G$ is exactly $N' \coloneqq Mq\vert\Sigma\vert/(\log^3 M) + M/(\log^3 M) + M\vert\Sigma\vert^q$. Because $2 \leq q \leq O(\log M)$, we have $N' = \Theta(M\vert\Sigma\vert^q)$.
\end{enumerate}

Below, we give an illustration of the matrix $\mat G$. Reduced Vandermonde gadgets are marked in red, and Hadamard gadgets are marked in blue. Because of the row deletion step applied to $\mat G$, all of the Hadamard matrices (which were orignaly square) are now wide. Asymptotically, the main width contribution is from the Hadamards.\\

\includegraphics[width=\textwidth-3em]{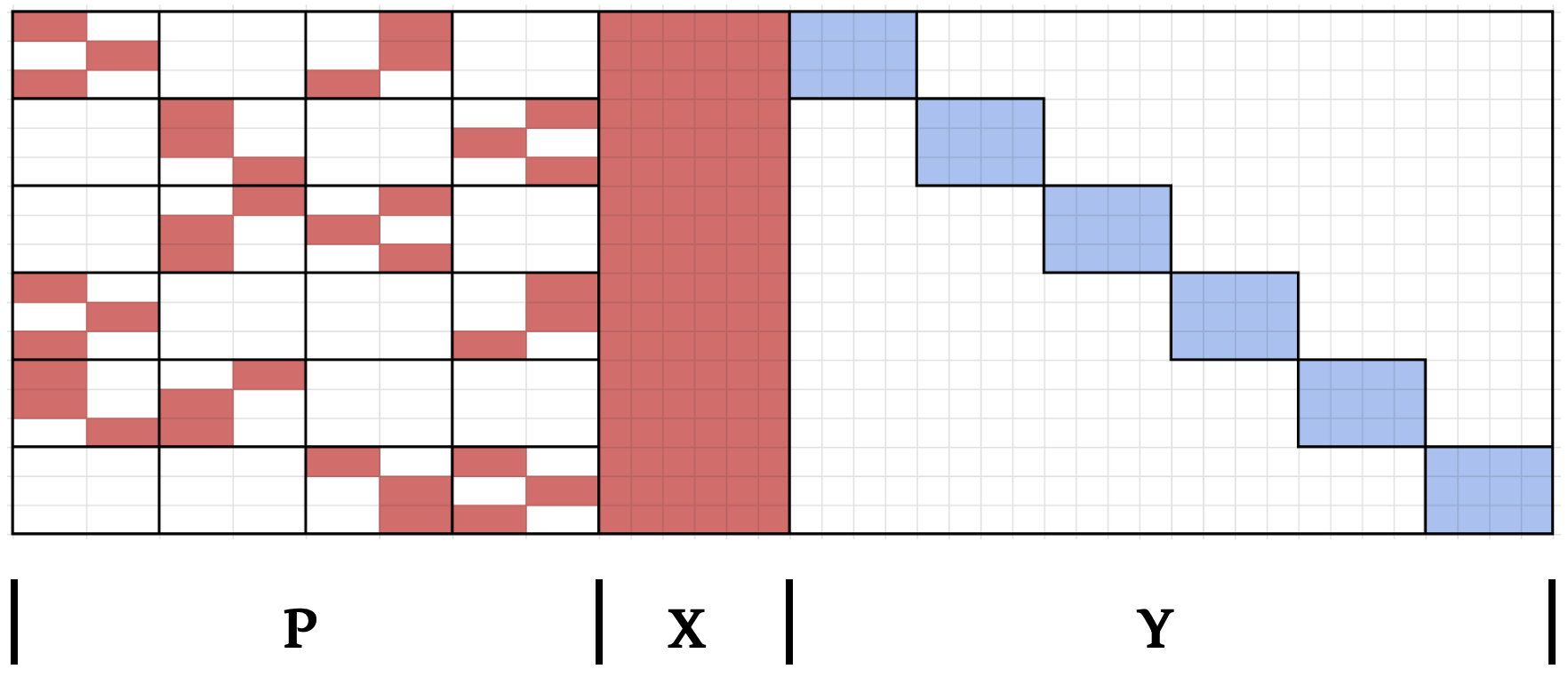}

\paragraph{Considering all Gadgets Together.} Now we examine the behavior of the new gadgets when we consider short vectors in the lattice spanned by the rows of $\mat{G}$. The purpose of $\mat X$ is to ensure that any short lattice vector corresponds to a linear combination of at least $M/(\log^3 M)$ basis vectors. As we discuss in Section \ref{sec:containscompleteness}, the number of basis vectors used to make a short vector in the completeness case (that is, when the PCP is satisfiable) will be at most $M$, but at least $M/(\log^3 M)$. Thus $\mat X$ ensures that (regardless of whether the PCP is satisfiable) any short lattice vector must use nearly the same number of basis vectors as for the completeness case.

Using the lower bound implied by $\mat X$, combined with the properties of the block Hadamard gadget $\mat Y$, we show in Lemma \ref{lem:numbertests} that any short lattice vector must correspond to a linear combination that takes basis vectors from at least $M/f'$ different PCP constraints, for some small slack factor $f'$. Recall that every row of $\mat G$ corresponds to a satisfying assignment for some constraint, so we deduce that \emph{any short lattice vector must pick at least one satisfying assignment for at least $M/f'$ distinct constraints.}

In Lemma \ref{lem:implication}, we use a subselection argument (which crucially uses the regularity of the PCP) to show that any linear combination of basis vectors which
\begin{enumerate}
    \item Indicates at most $f \cdot N$ (variable, assignment) tuples, and
    \item Picks at least one satisfying assignment from at least $M/f'$ distinct constraints
\end{enumerate}
implies the existence of an assignment to the PCP which satisfies at least an $(f \cdot f')^{-O(q)}$ fraction of constraints. Assuming that the slack factors $f$ and $f'$ are small enough with respect to $(1/s)^{1/q}$, this is enough to violate the soundness of the PCP. And by the discussion of the gadgets given above, we know that \emph{every} short lattice vector must correspond to a linear combination that satisfies both of the conditions. In other words, the existence of a short lattice vector implies the satisfiability of the PCP. We now elaborate with formal statements and full proofs.

\subsection{Formal Statement of the SVP Result} \label{sec:containscompleteness}

Fix any constant $p > 2$, or set $p = \infty$. Our SVP problem will be with respect to the $M'$ dimensional lattice $\mathcal{L} \coloneqq \{\mat v \mat G : \mat v \in \mathbb{Z}^{M'}\}$, where the $M'$ by $N'$ matrix $\mat G$ is constructed in Section \ref{sec:actualconstruct}. The approximation factor is $\gamma = (1/s)^{(1/2 - 1/p)/(25q)}$,\footnote{For $p = \infty$ this becomes $(1/s)^{1/(50q)}$.} and we measure vector length using the $\ell_p$ norm. Recall that that the PCP from which we constructed $\mat G$ satisfies $2 \leq q \leq O(\log M)$ and $(1/s)^{1/q} \geq (\log M)^{\omega(1)}$, which implies $\gamma = (\log M)^{\omega(1)}$. Because $M = 2^r$ (recall that $r$ is the number of random bits used by the verifier), we can rewrite these inequalities as $2 \leq q \leq O(r)$ and $(1/s)^{1/q} \geq r^{\omega(1)}$. It's clear from Section \ref{sec:actualconstruct} that the construction proceeds deterministically in $(2^r\vert\Sigma\vert^q)^{O(1)}$ time. We claim:
\begin{claim}[Completeness] \label{claim:completeness}
    If the PCP has a satisfying assignment, then assuming $M$ is sufficiently large, there exists a nonzero vector $\mat w \in \mathcal{L}$ with $\|\mat w\|_p \leq (N')^{1/p}$.\footnote{In the case of $p = \infty$, we have $(N')^{1/p} = 1$.}
\end{claim}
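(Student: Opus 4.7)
The plan is to exhibit $\mat w = \mat v\mat G$ derived from the satisfying assignment $\sigma$, with $\mat v \in \{0,\pm 1\}^{M'}$ supported on a set of size between $M/\log^3 M$ and $M$ (consistent with the window anticipated in the surrounding discussion). The starting template sets $\mat v_{(t,\sigma|_t)} = 1$ for every $t \in [M]$, giving $\|\mat v\|_0 = M$; sign flips and a small number of auxiliary satisfying-assignment rows may be added while keeping $\mat v \in \{0,\pm 1\}^{M'}$. Every row of $\mat G$ included in $\mat v$ survives the row-deletion step in the construction of $\mat G$, since $\sigma|_t$ is, by hypothesis, a satisfying assignment for constraint $t$.

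The $\mat Y$-component of $\mat w$ is easy to control: since $\mat Y = \mat H \otimes \mat I_M$ places the rows for different constraints into disjoint column blocks of width $|\Sigma|^q$, and each selected row contributes one $\pm 1$ row of the Hadamard matrix $\mat H$ into its block, the $\mat Y$-component of $\mat w$ is supported on at most $M|\Sigma|^q$ coordinates, each of magnitude $1$. Its $\ell_p$-norm is therefore at most $(M|\Sigma|^q)^{1/p}$, which is at most $(N')^{1/p}$ for $M$ sufficiently large (the dominant term in $N' = Mq|\Sigma|/\log^3 M + M/\log^3 M + M|\Sigma|^q$ is $M|\Sigma|^q$). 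The heart of the proof is to arrange $\mat w$ so that its $\mat P$-block and its $\mat X$-block both vanish identically over $\mathbb Z$: because every Vandermonde entry is scaled by $(M|\Sigma|^q/s)^2$, a single uncanceled coordinate in either block would instantly overshoot the target bound, so approximate cancellation is useless.

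To force these exact cancellations I would exploit two sources of freedom. First, each reduced Vandermonde matrix has $a-1 = \Theta((M|\Sigma|^q)^2)$ rows available, leaving vast latitude in assigning distinct Vandermonde rows to the nonzero positions of $\mat S$ (for $\mat P$) and to the indices $(t,\vec\alpha)$ (for $\mat X$). Second, $\mat v$ can be signed and augmented with a few auxiliary satisfying-assignment rows. The constraints $\mat v\mat P = \mathbf 0$ and $\mat v\mat X = \mathbf 0$ form a linear system with only $O(Mq/\log^3 M)$ equations, far fewer than the $\Theta(M)$ candidate coefficients at our disposal, so a dimension-counting argument produces a nontrivial signed integer solution. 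Here the regularity of the PCP, in particular the identity $Nd = Mq$ and the fact that each variable occurs in exactly $d$ constraints, is crucial: it balances the per-variable-assignment cancellation equations in the $\mat P$-block so that $\pm 1$ coefficients suffice. The chief obstacle is that the Vandermonde labeling is deterministic and fixed before $\sigma$ is specified, yet must simultaneously support $\mathbb Z$-cancellation with $\{0,\pm 1\}$ coefficients for every satisfying assignment the PCP might admit; I expect this to hinge on the large capacity $a = \Theta((M|\Sigma|^q)^2)$ of the Vandermonde's index set together with the regularity condition. Once the cancellation is secured the $\mat P$- and $\mat X$-components of $\mat w$ vanish, so $\|\mat w\|_p$ equals the $\ell_p$-norm of its $\mat Y$-component, which is at most $(N')^{1/p}$ as computed above.
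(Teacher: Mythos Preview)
Your overall plan matches the paper's: restrict to the $M$ basis vectors corresponding to $\sigma$, find a nonzero $\mat v \in \{-1,0,1\}^M$ that kills both the $\mat P$- and $\mat X$-blocks, and observe that the $\mat Y$-block then has all entries in $\{-1,0,1\}$. The gap is in how you obtain the $\{-1,0,1\}$ vector. Dimension counting only yields an integer kernel vector, possibly with enormous entries, and a single coefficient of magnitude $k>1$ would already push $\|\mat v\mat Y'\|_\infty$ to $k$. Your attributions of the $\pm 1$ bound to ``regularity balancing the cancellation equations'' and to ``latitude in the Vandermonde labelling'' are both off: the construction fixes the labelling before the reduction is run (you do not get to choose it per $\sigma$), and regularity is used only to count columns, not to force small coefficients. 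The actual mechanism is plain pigeonhole (Proposition~\ref{lem:SmallCombination}): restricted to the $M$ rows from $\sigma$, the matrix $[\mat P'\,\|\,\mat X']$ has only $Mq/\log^3 M + M/\log^3 M = O(M/\log^2 M)$ nonzero columns (using that $\sigma$ assigns a \emph{single} value to each variable and that $q \le O(\log M)$), with entries bounded by $M^{O(1)}$. The $2^M$ vectors in $\{0,1\}^M$ therefore map to at most $M^{O(M/\log^2 M)} = 2^{O(M/\log M)}$ possible images; two collide, and their difference is the desired nonzero $\{-1,0,1\}^M$ vector. This argument is completely insensitive to which Vandermonde rows were assigned, so your ``chief obstacle'' does not arise.

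Also drop the ``auxiliary satisfying-assignment rows'': as soon as two selected rows share a constraint $t$, that constraint's column block in $\mat Y'$ has two nonzero entries per column, and your claim that every $\mat Y$-coordinate has magnitude $1$ breaks. The paper uses exactly one row per constraint, which is precisely what makes every column of $\mat Y'$ contain a single $\pm 1$ entry and hence $\|\mat v\mat Y'\|_\infty \le 1$.
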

\begin{claim}[Soundness] \label{claim:soundness}
    If the PCP does not have a satisfying assignment, then assuming $M$ is sufficiently large, every nonzero vector $\mat w \in \mathcal{L}$ has $\|\mat w\|_p > \gamma (N')^{1/p}$.
\end{claim}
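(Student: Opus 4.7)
The plan is to argue the contrapositive: if some nonzero $\mat w = \mat v \mat G \in \mathcal{L}$ satisfies $\|\mat w\|_p \leq \gamma (N')^{1/p}$, then from $\mat v$ we can extract an assignment to the PCP satisfying strictly more than an $s$ fraction of constraints, contradicting the soundness hypothesis. The argument peels off the gadgets of $\mat G$ one family at a time.

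First, I would exploit the large scaling factor $(M|\Sigma|^q/s)^2$ attached to every entry of $\mat X$ and of every Vandermonde block inside $\mat P$. Since this is vastly larger than the length budget, both $\mat v \mat X$ and $\mat v \mat P$ must be exactly zero; otherwise the restriction of $\mat w$ to those columns alone already exceeds the threshold. Corollary \ref{cor:Vandermonde} then forces two structural conclusions: (i) applied to $\mat X$, the support $\mathcal{S} \coloneqq \{i : \mat v_i \neq 0\}$ satisfies $|\mathcal{S}| > M/\log^3 M$; (ii) applied column-block by column-block inside $\mat P$, every (variable, assignment) tuple is either untouched by $\mathcal{S}$ or covered by at least $d/\log^3 M$ members of $\mathcal{S}$. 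A double-counting argument using the regularity identity $Nd = Mq$, together with the crude upper bound $|\mathcal{S}| = O(\gamma^p M)$ --- which follows from the observation that each nonzero coordinate of $\mat v$ contributes at least $|\Sigma|^q$ to $\|\mat w\|_p^p$ through its Hadamard block --- then establishes Lemma \ref{lem:numberpairs}: the number $T$ of indicated tuples is at most $f \cdot N$, where $f = O(\gamma^p (\log M)^{O(1)})$.

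Second, I would run the Hadamard anti-concentration argument that underlies Lemma \ref{lem:numbertests}. Letting $\mat v_t$ denote the restriction of $\mat v$ to the block for the $t$-th PCP constraint and $k_t = \|\mat v_t\|_0$, orthogonality of $\mat H$ gives $\|\mat v_t \mat H\|_2^2 = |\Sigma|^q \|\mat v_t\|_2^2 \geq k_t |\Sigma|^q$, and Corollary \ref{cor:Holder} amplifies this to $\|\mat v_t \mat H\|_p \geq (|\Sigma|^q)^{1/p}\sqrt{k_t}$. Summing over the $B$ active blocks and applying convexity of $x \mapsto x^{p/2}$ yields
\[
\gamma^p N' \;\geq\; \|\mat w\|_p^p \;\geq\; |\Sigma|^q \sum_t k_t^{p/2} \;\geq\; |\Sigma|^q \cdot \frac{|\mathcal{S}|^{p/2}}{B^{p/2-1}}.
\]
Rearranging and plugging in $|\mathcal{S}| \geq M/\log^3 M$ gives $B \geq M/f'$ with $f' = \gamma^{p/(p/2-1)}(\log M)^{O(1)}$. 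The peculiar exponent $(1/2-1/p)/(25q)$ in the definition of $\gamma$ is chosen precisely so that $\gamma^{p/(p/2-1)}$ collapses to $(1/s)^{1/(25q)}$, stripping the awkward $p$-dependence from $f'$.

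Finally, I would invoke Lemma \ref{lem:implication}: any linear combination indicating at most $fN$ tuples while drawing a satisfying label from at least $M/f'$ distinct constraints defines, via a subselection argument that critically uses regularity, a PCP assignment satisfying at least an $(ff')^{-O(q)}$ fraction of constraints. Since $ff' = (1/s)^{O(1/q)}(\log M)^{O(1)}$, raising to the $O(q)$ power yields $(1/s)^{O(1)}(\log M)^{O(q)}$, and the hypothesis $(1/s)^{1/q} \geq (\log M)^{\omega(1)}$ combined with the specific constant $25$ in $\gamma$ ensures this is strictly less than $1/s$, violating soundness. I expect the main obstacle to be precisely this final parameter bookkeeping: the Hadamard amplification is tight only because of the careful tuning of $\gamma$, and the joint exponent of $1/s$ appearing in $ff'$ must stay strictly below $1$ even after being raised to the $O(q)$ power coming from the subselection in Lemma \ref{lem:implication}.
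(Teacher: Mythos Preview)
Your skeleton is exactly the paper's: force $\mat v\mat X = \mat v\mat P = \mat 0$ via the scaling, use Corollary~\ref{cor:Vandermonde} on $\mat X$ to get $\lvert\mathcal S\rvert > M/\log^3 M$, use it block-wise on $\mat P$ to get the multiplicity gap, use the Hadamard part to lower-bound the number $B$ of indicated constraints, and finish with the random-assignment subselection of Lemma~\ref{lem:implication}. Your convexity computation $\sum_t k_t^{p/2}\ge \lvert\mathcal S\rvert^{p/2}/B^{p/2-1}$ is in fact a clean alternative to the paper's route in Lemma~\ref{lem:numbertests} (the paper instead applies Corollary~\ref{cor:Holder} globally to the width-$B\lvert\Sigma\rvert^q$ active submatrix of $\mat Y$), and you correctly observe that $\gamma^{2p/(p-2)} = (1/s)^{1/(25q)}$.

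There is, however, one concrete quantitative gap. Your ``crude upper bound'' $\lvert\mathcal S\rvert = O(\gamma^p M)$, and hence your tuple bound $f = O(\gamma^p(\log M)^{O(1)})$, carries a $p$-dependence that does \emph{not} wash out in the final step. With $\gamma = (1/s)^{(1/2-1/p)/(25q)}$ one has $\gamma^p = (1/s)^{(p-2)/(50q)}$, so after the subselection the fraction of satisfied constraints you produce is only about $(ff')^{-O(q)} \approx s^{\,\Theta((p+1)/50)}$; once $p$ exceeds roughly $48$ this is \emph{smaller} than $s$ and no contradiction results. (It also makes no sense for $p=\infty$, which the claim is meant to cover.) The paper avoids this by bounding $\lvert\mathcal S\rvert$ via a \emph{global} application of Corollary~\ref{cor:Holder} to $\mat Y$ (Lemma~\ref{lem:maxnonzero}): from $\|\mat v\mat Y\|_2^2 \ge \lvert\mathcal S\rvert\,\lvert\Sigma\rvert^q$ and $\|\mat v\mat Y\|_p \ge (N')^{1/p-1/2}\|\mat v\mat Y\|_2$ one gets $\lvert\mathcal S\rvert \le M\gamma^3$ with an exponent \emph{independent of~$p$}. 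Plugging this $p$-free bound into the subselection is exactly what lets the paper's final exponent collapse to the uniform $4/5$. Replacing your block-by-block estimate for $\lvert\mathcal S\rvert$ with this global one closes the gap with no other change to your outline.
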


From these claims we deduce the main theorem. 

\begin{theorem}\label{thm:mainresultformal}
    Suppose there exists a PCP for SAT instances of size $n$ that uses $r$ random bits to read $q \leq O(r)$ locations in a proof over alphabet $\Sigma$ and has soundness parameter $s$. Suppose further that every proof location is read with equal probability, that $(1/s)^{1/q} \geq r^{\omega(1)}$, and that the PCP can be constructed deterministically in time $T(n)$. Then for all constants $p > 2$ (and for $p = \infty$), there exists a $T(n) + (2^r\vert\Sigma\vert^q)^{O(1)}$ time deterministic reduction from SAT instances of size $n$ to instances of $\gamma$-GapSVP$_p$ on lattices of dimension $O(2^r\vert\Sigma\vert^q)$, where $\gamma = (1/s)^{(1/2-1/p)/(25q)}$.
\end{theorem}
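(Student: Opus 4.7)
The plan is to present the construction of Section~\ref{sec:actualconstruct} as the reduction itself, and then derive Theorem~\ref{thm:mainresultformal} as an immediate consequence of Claims~\ref{claim:completeness} and~\ref{claim:soundness}.

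Given a SAT instance of size $n$, I would first invoke the hypothesized PCP in time $T(n)$ to obtain a verifier $\mathcal{V}$ meeting the regularity and parameter conditions. Interpreting $\mathcal{V}$ as a CSP with $M = 2^r$ constraints, $N$ variables, alphabet $\Sigma$, and arity $q$ per constraint (as in Section~\ref{sec:cspperspective}), I would form the indicator matrix $\mat S$ and then apply Section~\ref{sec:actualconstruct} verbatim to deterministically build the matrix $\mat G = [\mat P \| \mat X \| \mat Y]$ (with rows that are entirely zero on the $\mat P$-restriction removed). The output GapSVP$_p$ instance is the lattice $\mathcal{L}$ generated by the rows of $\mat G$, together with threshold $k \coloneqq (N')^{1/p}$ and approximation factor $\gamma = (1/s)^{(1/2-1/p)/(25q)}$.

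For correctness, Claim~\ref{claim:completeness} guarantees in the YES case a nonzero $\mat w \in \mathcal{L}$ with $\|\mat w\|_p \leq k$, and Claim~\ref{claim:soundness} guarantees in the NO case that every nonzero $\mat w \in \mathcal{L}$ has $\|\mat w\|_p > \gamma k$; hence any GapSVP$_p$ distinguisher decides SAT. For the parameter bookkeeping, I would observe that the lattice dimension is $M' \leq M\vert\Sigma\vert^q = O(2^r\vert\Sigma\vert^q)$, and that each step of the construction (forming $\mat S$, inserting the scaled Vandermonde rows, appending $\mat X$, and taking the Kronecker product for $\mat Y$) runs in deterministic $\mathrm{poly}(M\vert\Sigma\vert^q)$ time, giving the stated overall running time $T(n) + (2^r\vert\Sigma\vert^q)^{O(1)}$. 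The hypothesis $(1/s)^{1/q} \geq r^{\omega(1)}$ ensures $\gamma = r^{\omega(1)}$ (in particular superconstant), as required for a nontrivial hardness-of-approximation statement.

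\textbf{Main obstacle.} The theorem itself is a thin corollary once the two claims are in hand, so the substantive work lies entirely in Claims~\ref{claim:completeness} and~\ref{claim:soundness}. Claim~\ref{claim:completeness} I expect to be routine: a satisfying CSP assignment picks out one surviving row per constraint, the scaled Vandermonde columns of $\mat P$ cancel exactly in that case, $\mat X$ cancels similarly, and each retained row contributes only $\pm 1$ Hadamard entries, yielding an $\ell_p$ norm of $(N')^{1/p}$. The hard part will be Claim~\ref{claim:soundness}: one must simultaneously leverage (i) Corollary~\ref{cor:Vandermonde} on the gadgets inside $\mat P$ (via Lemma~\ref{lem:numberpairs}) to bound the number of indicated (variable, assignment) pairs, (ii) the anti-concentration of the block Hadamard $\mat Y$ combined with the lower bound on support forced by $\mat X$ (Lemma~\ref{lem:numbertests}) to spread the support across many distinct constraints, and (iii) a regularity-based subselection argument (Lemma~\ref{lem:implication}) to extract a CSP assignment satisfying strictly more than an $s$-fraction of constraints, contradicting soundness. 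Calibrating the slack factors $f, f'$ against $(1/s)^{1/q}$ and converting $\ell_2$ estimates into $\ell_p$ estimates via Corollary~\ref{cor:Holder} is what will produce the specific exponent $(1/2 - 1/p)/(25q)$ in $\gamma$.
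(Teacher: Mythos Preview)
Your proposal is correct and matches the paper exactly: Theorem~\ref{thm:mainresultformal} is derived directly from Claims~\ref{claim:completeness} and~\ref{claim:soundness} together with the deterministic construction of $\mat G$ in Section~\ref{sec:actualconstruct}, and your parameter bookkeeping (dimension $M' \le M\vert\Sigma\vert^q = O(2^r\vert\Sigma\vert^q)$, running time $(2^r\vert\Sigma\vert^q)^{O(1)}$ after the $T(n)$ PCP construction) is precisely what the paper records in Section~\ref{sec:containscompleteness}.

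One correction to your sketch of Claim~\ref{claim:completeness}, however: the scaled Vandermonde columns of $\mat P$ and $\mat X$ do \emph{not} ``cancel exactly'' just because you selected the $M$ rows corresponding to a satisfying assignment. Those $M$ rows still have nonzero entries in $Mq/\log^3 M + M/\log^3 M = O(M/\log^2 M)$ columns of $[\mat P' \,\|\, \mat X']$, and the paper must invoke a pigeonhole argument (Proposition~\ref{lem:SmallCombination}) to find a nonzero $\mat v \in \{-1,0,1\}^M$ with $\mat v[\mat P' \,\|\, \mat X'] = \mat 0$. This is exactly why the Vandermonde gadgets are built with width $d/\log^3 M$ and $M/\log^3 M$ rather than $d$ and $M$, and why the completeness vector uses somewhere between $M/\log^3 M$ and $M$ basis rows rather than all $M$. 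Your soundness sketch, by contrast, tracks Lemmas~\ref{lem:minnonzero}--\ref{lem:implication} accurately.
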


\section{Proof of Claim \ref{claim:completeness} (Completeness)}
Here we show that a satisfying assignment for the PCP maps to a short lattice vector. We will use the following proposition to help zero out the reduced Vandermonde gadgets, which ensures that our lattice vector will not have any large-magnitude entries.

\begin{proposition} \label{lem:SmallCombination}
    Let $\mat M \in \mathbb{Z}^{n \times O(n/\log^2 n)}$ be a matrix with entries in $\{-n^{O(1)}, \ldots, n^{O(1)}\}$. Then assuming $n$ is sufficiently large, there exists a nonzero vector $\mat v \in \{-1, 0, 1\}^{n}$ with $\mat v \mat M = \mat 0$.
\end{proposition}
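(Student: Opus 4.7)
The plan is to apply a straightforward Siegel's lemma / pigeonhole argument. The key observation is that the number of distinct values the map $\mat v \mapsto \mat v\mat M$ can produce, when $\mat v$ is restricted to $\{0,1\}^n$, is much smaller than the number of inputs $2^n$, provided the codomain is sufficiently low-dimensional and the entries of $\mat M$ are at most polynomially large. Once we secure this, two distinct $\{0,1\}$-vectors must collide, and their difference gives the desired nonzero $\{-1,0,1\}$-vector in the kernel.

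More concretely, first I would bound the range. For any $\mat v \in \{0,1\}^n$, each coordinate of $\mat v\mat M$ is a sum of at most $n$ entries of $\mat M$, hence lies in an interval of integers of size at most $2n \cdot n^{O(1)} + 1 = n^{O(1)}$. With $\mat M$ having $m = O(n/\log^2 n)$ columns, the total number of possible output vectors is at most $(n^{O(1)})^m = 2^{O(m \log n)} = 2^{O(n/\log n)}$.

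Next I would invoke pigeonhole: since $2^n$ grows strictly faster than $2^{O(n/\log n)}$, for all sufficiently large $n$ we have $2^n > 2^{O(n/\log n)}$, so there must exist two distinct vectors $\mat v_1, \mat v_2 \in \{0,1\}^n$ with $\mat v_1 \mat M = \mat v_2 \mat M$. Setting $\mat v \coloneqq \mat v_1 - \mat v_2$ yields a nonzero vector in $\{-1,0,1\}^n$ with $\mat v \mat M = \mat 0$, which is exactly the conclusion.

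There is no real obstacle here; the only thing to be a little careful about is the constant hidden in the $n^{O(1)}$ entry bound and in $O(n/\log^2 n)$, to confirm that the exponent $O(n/\log n)$ in the counting bound genuinely beats $n$ for large enough $n$. Since $\log n = \omega(1)$, this comparison holds for all $n$ past some threshold determined by those implicit constants, which is exactly the ``$n$ sufficiently large'' qualifier in the statement.
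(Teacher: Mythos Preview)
Your proposal is correct and is essentially identical to the paper's own proof: both count the $2^n$ vectors in $\{0,1\}^n$ against the at most $n^{O(n/\log^2 n)} = 2^{O(n/\log n)}$ possible values of $\mat v\mat M$, apply pigeonhole, and take the difference of two colliding $\{0,1\}$-vectors. Your added remark about tracking the implicit constants is exactly what the ``$n$ sufficiently large'' hypothesis absorbs.
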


\begin{proof}
    There are $2^{n}$ vectors $\mat w \in \{0, 1\}^{n}$, and there are $n^{O(n/\log^2 n)} = 2^{O(n/\log n)}$ possible values for $\mat w \mat M$. By the pigeonhole principle, there must exist two distinct vectors $\mat w', \mat w''$ with $\mat w' \mat M = \mat w'' \mat M$. Take $\mat v = \mat w' - \mat w''$.
\end{proof}

Now let $\sigma$ be a satisfying assignment for the PCP, and let $\mat G'$ be the submatrix of $\mat G$ induced by the $M$ rows corresponding to $\sigma$. ($\mat G'$ consists of exactly one row per PCP constraint.) We can write $\mat G' = [\mat P' \| \mat X' \| \mat Y']$, where $\mat P', \mat X', \mat Y'$ are the corresponding row induced submatrices of $\mat P, \mat X, \mat Y$.

Observe that the nonzero entries in $\mat G'$ only touch the columns of $\mat P$ corresponding to one consistent assignment for each variable, which makes for a total of $Mq/(\log^3 M)$ columns. The submatrix $\mat X$ has width $M/(\log^3 M)$, so the nonzero entries in $\mat G'$ in fact touch $Mq/(\log^3 M) + M/(\log^3 M) = O(M/(\log^2 M))$ columns across both $\mat P$ and $\mat X$, where we used that $q \leq O(\log M)$. Thus we can apply Proposition \ref{lem:SmallCombination} to guarantee the existence of a nonzero vector $\mat v \in \{-1, 0, 1\}^{M}$ with $\mat v \mat P' = \mat 0$ and $\mat v \mat X' = \mat 0$.

The above implies that all of the nonzero entries in $\mat v \mat G'$ will come from $\mat v \mat Y'$. Due to the structure of $\mat Y$ and the fact that $\mat G'$ consists of exactly one row per PCP constraint, every column of $\mat Y'$ has exactly one nonzero entry. This implies $\|\mat v \mat Y'\|_\infty = 1$ and thus $\|\mat v \mat G'\|_\infty = 1$. By padding $\mat v$ with zeros we get a vector $\mat v'$ such that $\|\mat v' \mat G\|_{\infty} = 1$. Since the width of $\mat G$ is defined as $N'$, this implies that $\|\mat v' \mat G\|_p \leq (N')^{1/p}$, and we can take $\mat w = \mat v' \mat G$ as our short lattice vector.

\section{Proof of Claim \ref{claim:soundness} (Soundness)} \label{sec:proof}

Here we prove that, if the PCP does not have a satisfying assignment, every nonzero vector $\mat w \in \mathcal{L}$ will have a relatively large $\ell_p$ norm. Most of the work is dedicated to ruling out certain nonzero lattice points $\mat w \in \mathcal{L}$, based on the structure of the corresponding vectors $\mat v$ with $\mat v \mat G = \mat w$.

The first lemma\footnote{This lemma is also true for the case that $p = 2$.} shows that if $\mat v$ has too few nonzero entries, then $\|\mat v\mat G\|_p$ must be large. We make use of the reduced Vandermonde gadget $\mat X$.

\begin{lemma} \label{lem:minnonzero}
    Assume that $\mat v \in \mathbb{Z}^{M'}$ is a nonzero vector with $\|\mat v\|_0 \leq M/(\log^3 M)$. Then assuming $M$ is sufficiently large, $\|\mat v\mat G\|_p \geq (M\vert\Sigma\vert^q/s)^2 > \gamma(N')^{1/p}$.
\end{lemma}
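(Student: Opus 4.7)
The plan is to leverage the reduced Vandermonde gadget $\mat{X}$ sitting inside $\mat{G}$: since any few of its rows are linearly independent, any sufficiently sparse nontrivial integer combination of rows of $\mat{G}$ must produce at least one coordinate of huge magnitude, thanks to the $(M\vert\Sigma\vert^q/s)^2$ scaling factor baked into $\mat{X}$.

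Let $\mat{X}' \in \mathbb{Z}^{M' \times M/\log^3 M}$ denote the submatrix of $\mat{X}$ whose rows are exactly those that survive the deletion step used to build $\mat{G}$; equivalently, $\mat{X}'$ appears as a block of columns inside $\mat{G}$. Because $\mat{X}'$ is a row-subset of the full $(a-1) \times (M/\log^3 M)$ reduced Vandermonde matrix $\mat{V}$, every $b \times b$ submatrix of $\mat{X}'$ (with $b := M/\log^3 M$) is also a $b \times b$ submatrix of $\mat{V}$ and therefore has full rank. In particular, any collection of at most $b$ rows of $\mat{X}'$ is linearly independent over $\mathbb{Z}$.

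First I would deduce $\mat{v}\mat{X}' \neq \mat{0}$: the support of $\mat{v}$ selects a set of at most $\|\mat{v}\|_0 \leq b$ rows of $\mat{X}'$, and $\mat{v}\mat{X}'$ is a nontrivial integer combination of those rows, which is nonzero by the rank statement above (this is essentially the content of Corollary~\ref{cor:Vandermonde} applied to the relevant row-subset). Next I would trace through the scaling: every entry of $\mat{X}$ is an integer multiple of $(M\vert\Sigma\vert^q/s)^2$, so every coordinate of $\mat{v}\mat{X}'$ is such a multiple as well, and any nonzero coordinate therefore has absolute value at least $(M\vert\Sigma\vert^q/s)^2$. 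Since $\mat{X}'$ sits inside $\mat{G}$, this chains to
\[
\|\mat{v}\mat{G}\|_p \;\geq\; \|\mat{v}\mat{G}\|_\infty \;\geq\; \|\mat{v}\mat{X}'\|_\infty \;\geq\; (M\vert\Sigma\vert^q/s)^2.
\]

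The only remaining step is to verify $(M\vert\Sigma\vert^q/s)^2 > \gamma(N')^{1/p}$. Using $p > 2$ and $q \geq 2$, we have $\gamma = (1/s)^{(1/2-1/p)/(25q)} \leq (1/s)^{1/100}$, and $(N')^{1/p} \leq N' = \Theta(M\vert\Sigma\vert^q)$, so the right-hand side is at most $O(M\vert\Sigma\vert^q \cdot (1/s)^{1/100})$, which is dominated by the left-hand side $(M\vert\Sigma\vert^q)^2/s^2$ once $M$ is large enough. I do not anticipate any real obstacle here; the only substantive point is to observe that the Vandermonde rank property is preserved by the row-deletion step used to construct $\mat{G}$, which is automatic because row-restricting $\mat{V}$ cannot create any new linear dependencies.
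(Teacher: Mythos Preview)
Your proposal is correct and follows essentially the same approach as the paper: both use that $\mat X$ (after row deletion) is a row-subset of a scaled reduced Vandermonde matrix, invoke Corollary~\ref{cor:Vandermonde} to force a nonzero coordinate of magnitude at least $(M\vert\Sigma\vert^q/s)^2$, and then check the easy numerical comparison with $\gamma(N')^{1/p}$. Your explicit routing through $\|\cdot\|_\infty$ and your care in noting that row deletion preserves the Vandermonde rank property are exactly what the paper does more tersely.
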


\begin{proof}
    Observe that $\|\mat v \mat G\|_p \geq \|\mat v \mat X\|_p$, because $\mat X$ is a column-induced submatrix of $\mat G$. Recall that $\mat X$ consists of a subset of rows taken from an $(a, M/(\log^3 M))$ reduced Vandermonde matrix, where every entry is multiplied by $(M\vert\Sigma\vert^q/s)^2$. By Corollary \ref{cor:Vandermonde}, we know that $\mat v \mat X$ must have at least one nonzero coordinate, and the smallest nonzero value possible is $(M\vert\Sigma\vert^q/s)^2$. This gives $\|\mat v\mat G\|_p \geq (M\vert\Sigma\vert^q/s)^2$, and since $q \geq 2$, $s \leq 1$, and $p > 2$, we have $(M\vert\Sigma\vert^q/s)^2 > \gamma N' = \Theta((1/s)^{(1/2 - 1/p)/(25q)}M\vert\Sigma\vert^q)$ when $M$ is sufficiently large.
\end{proof}

The next lemma\footnote{This lemma is also true for the case that $p = 2$.} gives an upper bound on $\|\mat v\|_0$. Here, we make use of the block-diagonal Hadamard gadgets in submatrix $\mat Y$.

\begin{lemma} \label{lem:maxnonzero}
    Suppose $\mat v \in \mathbb{Z}^{M'}$ is a nonzero vector with $\|\mat v\|_0 \geq M\cdot \gamma^3$. Then assuming $M$ is sufficiently large, $\|\mat v\mat G\|_p > \gamma(N')^{1/p}$.
\end{lemma}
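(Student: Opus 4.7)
The plan is to lower bound $\|\mat v \mat G\|_p$ using only the Hadamard column block $\mat Y$, since $\|\mat v \mat G\|_p \geq \|\mat v \mat Y\|_p$ trivially (the $\ell_p$ norm of a vector dominates the $\ell_p$ norm of any restriction to a subset of coordinates). I would first compute $\|\mat v \mat Y\|_2$ exactly from the block-Hadamard structure, then use Corollary~\ref{cor:Holder} to trade dimension for $\ell_p$ norm, and finally substitute the hypothesis $\|\mat v\|_0 \geq M\gamma^3$.

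For the first step, the key observation is that $\mat Y = \mat H \otimes \mat I_M$, even after deleting the rows corresponding to unsatisfying candidate assignments, causes $\mat v \mat Y$ to decompose into $M$ disjoint length-$\vert\Sigma\vert^q$ blocks: the $t$-th block is the image of a row-submatrix $\mat H^{(t)}$ of $\mat H$ under the restricted coefficient vector $\mat u^{(t)}$ (where $\mat u^{(t)}$ collects the entries of $\mat v$ at surviving rows indexed by constraint $t$). Because $\mat H$ has pairwise orthogonal rows each of squared $\ell_2$ norm $\vert\Sigma\vert^q$, so does any row-submatrix; hence $\mat H^{(t)} (\mat H^{(t)})^T = \vert\Sigma\vert^q \mat I$, which gives $\|\mat u^{(t)} \mat H^{(t)}\|_2^2 = \vert\Sigma\vert^q \|\mat u^{(t)}\|_2^2$. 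Summing over $t$ and using integrality of $\mat v$ yields
\[\|\mat v \mat Y\|_2^2 = \vert\Sigma\vert^q \|\mat v\|_2^2 \geq \vert\Sigma\vert^q \|\mat v\|_0.\]

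For the second and third steps, Corollary~\ref{cor:Holder} applied to $\mat v \mat Y \in \mathbb{Z}^{M\vert\Sigma\vert^q}$ gives $\|\mat v \mat Y\|_p \geq (M\vert\Sigma\vert^q)^{1/p - 1/2} \|\mat v \mat Y\|_2$. Chaining this with the bound above and the hypothesis $\|\mat v\|_0 \geq M\gamma^3$ collapses the right-hand side to $(M\vert\Sigma\vert^q)^{1/p} \gamma^{3/2}$; since $N' = \Theta(M\vert\Sigma\vert^q)$, this is $\Theta((N')^{1/p}) \cdot \gamma^{3/2}$, which dominates $\gamma (N')^{1/p}$ once $\gamma = (\log M)^{\omega(1)}$ is large enough. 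The $p = \infty$ endpoint reduces to checking $\gamma^{3/2} > \gamma$, which is immediate.

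The only place where care is needed is the block-Hadamard $\ell_2$ identity, since $\mat Y$ is no longer a clean Kronecker product after the row deletion; the argument relies on the fact that a row-submatrix of a Hadamard matrix still has pairwise orthogonal rows. This is elementary, but it is the one spot where the row-deletion step (the mechanism by which PCP satisfiability is encoded into $\mat G$) implicitly enters. Beyond that, everything is routine bookkeeping: the $M$ Hadamard blocks decouple perfectly in $\ell_2$, H\"older handles the $\ell_2 \to \ell_p$ conversion, and no structural information about $\mat v$ beyond its sparsity is used, nor is any interaction with the $\mat P$ or $\mat X$ blocks required.
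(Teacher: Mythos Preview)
Your proposal is correct and follows essentially the same argument as the paper: restrict to the Hadamard block $\mat Y$, use pairwise orthogonality of its (row-deleted) rows to get $\|\mat v\mat Y\|_2^2 \geq \vert\Sigma\vert^q\|\mat v\|_0$, then apply Corollary~\ref{cor:Holder} and the hypothesis $\|\mat v\|_0 \geq M\gamma^3$ together with $N' = \Theta(M\vert\Sigma\vert^q)$ to obtain $\|\mat v\mat Y\|_p = \Omega(\gamma^{3/2}(N')^{1/p}) > \gamma(N')^{1/p}$. Your explicit handling of the row deletion via $\mat H^{(t)}(\mat H^{(t)})^T = \vert\Sigma\vert^q\mat I$ is a slightly more careful version of what the paper states in one line, but the substance is identical.
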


\begin{proof}
    Because $\mat Y$ is a column-induced submatrix of $\mat G$, we have $\|\mat v\mat G\|_p \geq \|\mat v \mat Y\|_p$. Since each row of $\mat{Y}$ is a vector in $\{\pm 1\}^{\vert\Sigma\vert^q}$, and every pair of rows is orthogonal (in the $\ell_2$ norm), we have that
    \[\|\mat v \mat{Y}\|_2 = \Omega\left(\sqrt{\|\mat v\|_0\vert\Sigma\vert^q}\right) = \Omega\left(\sqrt{M \cdot \gamma^3 \cdot \vert\Sigma\vert^q}\right) = \Omega(\gamma^{3/2} \cdot \sqrt{N'}),\]
    where the last equality holds because $N' = \Theta(M\vert\Sigma\vert^q)$.
    
    Because $p > 2$, an application of Corollary \ref{cor:Holder} gives
    \begin{align*}
        \|\mat v \mat{Y}\|_p \geq & (N')^{1/p - 1/2} \cdot \|\mat v \mat{Y}\|_2 \\
        = & \Omega((N')^{1/p - 1/2} \cdot \gamma^{3/2} \cdot \sqrt{N'}) \\
        = & \Omega(\gamma^{3/2} \cdot (N')^{1/p}).
    \end{align*}
    Since $\gamma \geq (\log M)^{\omega(1)}$, we have $\Omega(\gamma^{3/2}) > \gamma$ for sufficiently large $M$, in which case $\|\mat v \mat G\|_p > \gamma(N')^{1/p}$.
\end{proof}

Later, we actually use the contrapositive of Lemmas \ref{lem:minnonzero} and \ref{lem:maxnonzero}. In particular, we know that if $\|\mat v \mat G\|_p \leq \gamma(N')^{1/p}$ and $\mat v$ is nonzero, then $M/(\log^3 M) < \|\mat v\|_0 < M \cdot \gamma^3$.

In the remaining lemmas, we need to formalize the correspondence between nonzero entries in the vector $\mat v$ and different components of $\mat G$ and of the PCP. Our first definition maps from nonzero entries of $\mat v$ to rows of $\mat G$. (Recall that each row of $\mat G$ is a basis vector of the lattice.)

\begin{definition}[Indicated Basis Vectors]
    A row vector $\mat r$ from $\mat G$ is said to be \emph{indicated by} $\mat v$ if the linear combination $\mat v \mat G$ assigns a nonzero coefficient to $\mat r$.
\end{definition}

The next definition formalizes the correspondence between $\mat v$ and different PCP constraints.

\begin{definition}[Indicated Constraint]
    We say that \emph{$\mat v$ indicates constraint $t$} if $\mat v$ indicates at least one row vector representing an assignment to constraint $t$.
\end{definition}

In the following we use the Hadamard matrices again, this time exploiting (in a formal manner) the ``anti-concentration'' property.\footnote{This property only holds for $p$ norms with $p > 2$.}

\begin{lemma} \label{lem:numbertests}
    Let $\mat v \in \mathbb{Z}^{M'}$ be a nonzero vector, and assume that $\mat v$ indicates at most $M/\gamma^{2/(1/2 - 1/p)}$ distinct constraints of the PCP. Then assuming $M$ is sufficiently large, $\|\mat v\mat G\|_p > \gamma(N')^{1/p}$.
\end{lemma}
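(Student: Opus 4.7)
The plan is to exploit the anti-concentration property of the block-diagonal Hadamard gadget $\mat Y$, as previewed in Section \ref{sec:setup}. Since $\mat Y$ is a column-induced submatrix of $\mat G$, we have $\|\mat v \mat G\|_p \geq \|\mat v \mat Y\|_p$, so it suffices to lower-bound $\|\mat v \mat Y\|_p$. I may also assume $\|\mat v\|_0 > M/\log^3 M$: otherwise Lemma \ref{lem:minnonzero} already gives $\|\mat v \mat G\|_p \geq (M\vert\Sigma\vert^q/s)^2 > \gamma (N')^{1/p}$ for large $M$.

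Because $\mat Y = \mat H \otimes \mat I_M$ is block-diagonal with one $\vert\Sigma\vert^q \times \vert\Sigma\vert^q$ Hadamard block aligned with each PCP constraint, every row of $\mat Y$ indexed by $(t,\alpha_1,\ldots,\alpha_q)$ is supported only on the columns belonging to the $t$-th Hadamard block. If $T$ denotes the set of constraints indicated by $\mat v$, then $\mat v \mat Y$ is supported on at most $|T|\vert\Sigma\vert^q$ coordinates, with $|T| \leq M/\gamma^{2/(1/2 - 1/p)}$ by hypothesis. Meanwhile, within each Hadamard block the rows are mutually orthogonal with squared $\ell_2$ norm $\vert\Sigma\vert^q$, and rows from different blocks have disjoint support, so all rows of $\mat Y$ are pairwise orthogonal. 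Hence
\[
\|\mat v \mat Y\|_2^2 \;=\; \vert\Sigma\vert^q \sum_i v_i^2 \;\geq\; \vert\Sigma\vert^q \,\|\mat v\|_0 \;>\; \vert\Sigma\vert^q \cdot M / \log^3 M.
\]

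With both a small support size and a large $\ell_2$ norm in hand, I would apply Corollary \ref{cor:Holder} to $\mat v \mat Y$ viewed on its (at most) $|T|\vert\Sigma\vert^q$-coordinate support, giving $\|\mat v \mat Y\|_p \geq (|T|\vert\Sigma\vert^q)^{1/p - 1/2}\|\mat v \mat Y\|_2$. Substituting the bounds above and using the identity $(1/2 - 1/p) + 1/p = 1/2$ to cancel the powers of $M$ and $\vert\Sigma\vert^q$, this simplifies (up to constant factors) to
\[
\|\mat v \mat Y\|_p \;\geq\; \Omega\!\left(\gamma^2 \cdot (M\vert\Sigma\vert^q)^{1/p} / (\log M)^{3/2}\right) \;=\; \Omega\!\left(\gamma^2 (N')^{1/p} / (\log M)^{3/2}\right),
\]
which exceeds $\gamma(N')^{1/p}$ for $M$ large enough, since $\gamma \geq (\log M)^{\omega(1)}$.

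The main obstacle is pure bookkeeping: verifying that the threshold $M/\gamma^{2/(1/2 - 1/p)}$ in the hypothesis is tuned precisely so that the $|T|^{1/p - 1/2}$ factor (negative exponent, hence amplified by the small $|T|$) contributes exactly $\gamma^2$ growth, enough to absorb both the $\gamma$ slack in the target conclusion and the $(\log M)^{3/2}$ loss coming from the possibly weak lower bound $\|\mat v\|_0 \geq M/\log^3 M$. Conceptually, the three gadget pieces play their advertised roles in tandem: $\mat X$ rules out $\mat v$ with too small an $\ell_0$ norm via Lemma \ref{lem:minnonzero}, while $\mat Y$'s block-diagonal Hadamard structure forces the nonzero coordinates of $\mat v$ to be spread across many distinct PCP constraints.
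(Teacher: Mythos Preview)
Your proposal is correct and follows essentially the same argument as the paper: reduce to $\|\mat v\|_0 > M/\log^3 M$ via Lemma~\ref{lem:minnonzero}, bound the support of $\mat v\mat Y$ by $|T|\,\vert\Sigma\vert^q$, compute $\|\mat v\mat Y\|_2$ via orthogonality of the Hadamard rows, and then apply Corollary~\ref{cor:Holder} on that support to extract the $\gamma^2$ factor. The bookkeeping you flag is exactly the content of the paper's displayed chain of inequalities, and your final comparison $\gamma^2/(\log M)^{3/2} > \gamma$ via $\gamma = (\log M)^{\omega(1)}$ matches the paper's conclusion.
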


\begin{proof}
    We can assume that $\|\mat v\|_0 \geq M/(\log^3 M)$; otherwise by Lemma \ref{lem:minnonzero}, we already have $\|\mat v \mat G\|_p \geq (M\vert\Sigma\vert^q/s)^2 > \gamma(N')^{1/p}$. As before, we can write $\|\mat v \mat G\|_p \geq \|\mat v \mat Y\|_p$.
    
    We assumed that $\mat v$ indicates at most $M/\gamma^{2/(1/2 - 1/p)}$ distinct constraints of the PCP. So by the construction of $\mat Y$, the basis vectors indicated by $\mat v$ have nonzero entries in at most $(M/\gamma^{2/(1/2 - 1/p)}) \cdot \vert\Sigma\vert^q = \Theta(N'/\gamma^{2/(1/2 - 1/p)})$ columns of $\mat Y$. Thus we can find a column-induced submatrix $\mat{\hat{Y}}$ of $\mat Y$ such that the width of $\mat{\hat{Y}}$ is at most $\Theta(N'/\gamma^{2/(1/2 - 1/p)})$, and $\|\mat v \mat Y\|_p = \|\mat v \mat{\hat{Y}}\|_p $. Since each row of $\mat{\hat{Y}}$ is a vector in $\{\pm 1\}^{\vert\Sigma\vert^q}$, and every pair of rows is orthogonal (in the $\ell_2$ norm), we have that
    \[\|\mat v \mat{\hat{Y}}\|_2 = \Omega\left(\sqrt{\|\mat v\|_0\vert\Sigma\vert^q}\right) = \Omega\left(\sqrt{\frac{M\vert\Sigma\vert^q}{\log^3 M}}\right) = \Omega(\sqrt{N'}/\log^{3/2} M),\]
    since $N' = \Theta(M\vert\Sigma\vert^q)$.
    
    Using our upper bound on the width of $\mat{\hat{Y}}$, and the fact that $p > 2$, an application of Corollary \ref{cor:Holder} gives

    \begin{align*}
        \|\mat v \mat{\hat{Y}}\|_p \geq & (N'/\gamma^{2/(1/2 - 1/p)})^{1/p - 1/2} \cdot \|\mat v \mat{\hat{Y}}\|_2 \\
        = & (N')^{1/p - 1/2} \cdot \gamma^2 \cdot \|\mat v \mat{\hat{Y}}\|_2 \\
        \geq & \Omega((N')^{1/p - 1/2} \cdot \gamma^2 \cdot \sqrt{N'}/\log^{3/2} M) \\
        = & \Omega((N')^{1/p} \cdot \gamma^2/\log^{3/2} M) \\
        > & (N')^{1/p} \cdot \gamma
    \end{align*}
    where we used that, because $\gamma = (\log M)^{\omega(1)}$, we have $\gamma^2/\log^{3/2} M > \gamma$ when $M$ is sufficiently large.

    To summarize, we have shown that $\|\mat v \mat G\|_p \geq \|\mat v \mat Y\|_p = \|\mat v \mat{\hat{Y}}\|_p > (N')^{1/p} \cdot \gamma$, completing the proof.
\end{proof}

Now our plan is to use the reduced Vandermonde matrices inserted within submatrix $\mat P$ to argue that any nonzero vector $\mat v$ with $\mat v\mat G$ being short must indicate a small number of (variable, assignment) tuples in the original PCP. First we make the notion of ``indicated assignments'' more concrete, and define a notion of multiplicity for assignments.

\begin{definition}[Indicated Assignments]
    Let $\mat v \in \mathbb{Z}^{M'}$ be any vector. Consider the row-induced submatrix $\mat{\hat{P}}$ of $\mat P$ indicated by the nonzero entries in $\mat v$. (In other words, this is the submatrix of $\mat P$ composed of all rows that are assigned a nonzero coefficient in the linear combination $\mat v \mat P$.) Every (variable, assignment) tuple $(x, \alpha)$ corresponds to a set of columns in $\mat P$, and we say that a tuple $(x, \alpha)$ is \emph{indicated by $\mat v$} iff $\mat{\hat{P}}$ has a nonzero entry in that set of columns.
\end{definition}

Note that the set of all indicated assignments may be larger than the number of variables in the PCP. We now define a notion of \emph{multiplicity}, which counts how many times $\mat v$ indicates the same assignment for the same variable.

\begin{definition}[Multiplicity]
    We say that a tuple $(x, \alpha)$ \emph{has multiplicity $h$ with respect to $\mat v$} if there exist exactly $h$ distinct nonzero entries of $\mat v$ that indicate $(x, \alpha)$.
\end{definition}

We also need a definition which counts the number of \emph{different} assignments for the same variable.

\begin{definition}[Distinct Assignment Count]
    We say that a variable $x$ \emph{has distinct assignment count $h$ with respect to $\mat v$} if there exist exactly $h$ distinct values $\alpha$ such that $\mat v$ indicates $(x, \alpha)$.
\end{definition}

Below we demonstrate that, if $\mat v \mat G$ has short $\ell_p$ norm, then $\mat v$ cannot indicate too many distinct (variable, assignment) tuples.

\begin{lemma} \label{lem:numberpairs}
    Let $\mat v \in \mathbb{Z}^{M'}$ be a nonzero vector, and assume that $\mat v$ indicates at least $N \cdot \gamma^4$ distinct (variable, assignment) tuples. Then assuming $M$ is sufficiently large, we have $\|\mat v\mat G\|_p \geq (M\vert\Sigma\vert^q/s)^2 > \gamma(N')^{1/p}$.
\end{lemma}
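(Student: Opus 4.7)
The plan is to split into two cases depending on whether the $\mat P$-component of the linear combination vanishes.

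\textbf{Case 1: $\mat v \mat P \neq \mat 0$.} By construction, every nonzero entry of $\mat P$ is an entry of a reduced Vandermonde matrix scaled up by the integer factor $(M\vert\Sigma\vert^q/s)^2$. Hence every entry of $\mat P$, and therefore every entry of $\mat v \mat P$, is an integer multiple of $(M\vert\Sigma\vert^q/s)^2$. Because some entry of $\mat v \mat P$ is nonzero, its absolute value must be at least $(M\vert\Sigma\vert^q/s)^2$, so $\|\mat v\mat P\|_p \geq (M\vert\Sigma\vert^q/s)^2$. Since $\mat P$ is a column-induced submatrix of $\mat G$, this gives $\|\mat v\mat G\|_p \geq (M\vert\Sigma\vert^q/s)^2$. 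The second inequality $(M\vert\Sigma\vert^q/s)^2 > \gamma(N')^{1/p}$ is then a straightforward parameter check using $\gamma = (1/s)^{O(1)}$, $N' = \Theta(M\vert\Sigma\vert^q)$, and $p > 2$.

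\textbf{Case 2: $\mat v \mat P = \mat 0$.} Here I would exploit the Vandermonde gap property per indicated tuple together with a double-counting argument. Fix an indicated tuple $(x,\alpha)$. By construction, inside the $d/\log^3 M$ columns of $\mat P$ corresponding to $(x,\alpha)$, the nonzero rows are \emph{distinct} rows of a single scaled $(a, d/\log^3 M)$ reduced Vandermonde matrix. Restricting $\mat v$ to these rows yields a nonzero integer annihilator of those Vandermonde rows, so Corollary \ref{cor:Vandermonde} forces strictly more than $d/\log^3 M$ rows of $\mathrm{supp}(\mat v)$ to touch $(x,\alpha)$.

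Next I would double-count pairs (row in $\mathrm{supp}(\mat v)$, tuple it touches). Every row of $\mat G$ corresponds to a satisfying assignment for a single PCP constraint and so touches exactly $q$ tuples, giving a total of $q\|\mat v\|_0$. On the other hand, grouping by indicated tuple gives a total exceeding (number of indicated tuples)$\cdot d/\log^3 M \geq N\gamma^4\cdot d/\log^3 M$. Combining these with the regularity identity $Nd = Mq$ yields $\|\mat v\|_0 > M\gamma^4/\log^3 M$. Since $\gamma = (\log M)^{\omega(1)}$, for sufficiently large $M$ we have $\gamma/\log^3 M > 1$, and thus $\|\mat v\|_0 > M\gamma^3$. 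Invoking Lemma \ref{lem:maxnonzero} then gives $\|\mat v\mat G\|_p > \gamma(N')^{1/p}$, which is the bound that the downstream soundness proof will consume.

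\textbf{Expected main obstacle.} The delicate point is the setup of Case 2: one must check that the rows of $\mat P$ participating in each column-group are genuinely distinct rows of a single $(a, d/\log^3 M)$ reduced Vandermonde matrix so that Corollary \ref{cor:Vandermonde} applies with the intended width $d/\log^3 M$. This is precisely what the per-column insertion procedure of $\mat V'$ in the construction of $\mat P$ guarantees (and the primality of $a$ together with $a = \Theta((M\vert\Sigma\vert^q)^2)$ ensures enough distinct Vandermonde rows are available). The double-counting step then relies crucially on the regularized-PCP identity $Nd = Mq$ to convert the hypothesis on indicated tuples (which counts quantities sized by $N$) into a bound on $\|\mat v\|_0$ (sized by $M$) that triggers Lemma \ref{lem:maxnonzero}.
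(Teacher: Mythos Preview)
Your argument is correct and uses exactly the same ingredients as the paper (the per-tuple Vandermonde width $d/\log^3 M$, the regularity identity $Nd=Mq$, and Lemma~\ref{lem:maxnonzero}), but you organize them as the contrapositive of the paper's proof. The paper case-splits on $\|\mat v\|_0$: assuming $\|\mat v\|_0 < M\gamma^3$, it uses the pigeonhole/averaging bound $\frac{Mq\gamma^3}{N\gamma^4}=d/\gamma$ to locate \emph{one} tuple of multiplicity $\le d/\gamma < d/\log^3 M$, which by Corollary~\ref{cor:Vandermonde} forces $\mat v\mat P\neq\mat 0$. You instead case-split on whether $\mat v\mat P=\mat 0$: when it vanishes, Corollary~\ref{cor:Vandermonde} forces \emph{every} indicated tuple to have multiplicity $> d/\log^3 M$, and your double-count $q\|\mat v\|_0 \ge N\gamma^4\cdot d/\log^3 M$ gives $\|\mat v\|_0 > M\gamma^3$. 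These are the same averaging inequality read in opposite directions.

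One minor point (which you already flagged): in your Case~2 you conclude only $\|\mat v\mat G\|_p > \gamma(N')^{1/p}$ via Lemma~\ref{lem:maxnonzero}, not the sharper bound $(M|\Sigma|^q/s)^2$ stated in the lemma. The paper's proof has the identical imprecision in its complementary case $\|\mat v\|_0 \ge M\gamma^3$, and, as you note, only the weaker bound is consumed by Lemma~\ref{lem:implication}.
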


\begin{proof}
    We know by Lemma \ref{lem:maxnonzero} that if $\|\mat v\|_0 \geq M\cdot \gamma^3$, we automatically have $\|\mat v\mat G\|_p \geq (M\vert\Sigma\vert^q/s)^2 > \gamma(N')^{1/p}$. So assume otherwise.

    By construction, each nonzero entry of $\mat v$ will indicate exactly $q$ (variable, assignment) tuples. Thus, allowing for duplicates, $\mat v$ indicates at most $Mq \cdot \gamma^3$ (variable, assignment) tuples. By the averaging principle and the assumption that $\mat v$ indicates at least $N \cdot \gamma^4$ distinct (variable, assignment) tuples, there exists $(x, \sigma)$ with multiplicity at least one and at most
    \[\frac{Mq \cdot \gamma^3}{N \cdot \gamma^4} = d/\gamma,\]
    where we used that $Nd = Mq$.

    Now consider the column-induced submatrix $\mat{\hat{P}}$ of $\mat P$ corresponding to $(x, \sigma)$. We know that all nonzero rows of $\mat{\hat{P}}$ are distinct rows of a width $d/(\log^3 M)$ reduced Vandermonde matrix, where all entries are scaled up by a factor of $(M\vert\Sigma\vert^q/s)^2$. By the argument above, $\mat v$ corresponds to a linear combination of at least one nonzero row from $\mat{\hat{P}}$ and at most $d/\gamma$ nonzero rows from $\mat{\hat{P}}$. Using that $\gamma = (\log M)^{\omega(1)}$, we have that $d/\gamma \leq d/(\log M)^{\omega(1)} < d/\log^3 M$ for sufficiently large $M$. Therefore by Corollary \ref{cor:Vandermonde}, we have $\|\mat v\mat{\hat{P}}\|_p \geq (M\vert\Sigma\vert^q/s)^2 > \gamma(N')^{1/p}$. Because $\|\mat v\mat G\|_p \geq \|\mat v\mat{\hat{P}}\|_p$, the lemma follows.
\end{proof}

The final step is to combine all of the restrictions on $\mat v$ to show that the remaining candidate short vectors must implicate soundness of the PCP.

\begin{lemma} \label{lem:implication}
    Assume that the PCP from which we constructed $\mat G$ was not satisfiable. Then for every nonzero vector $\mat v \in \mathbb{Z}^{M'}$, it holds that $\|\mat v\mat G\|_p > \gamma(N')^{1/p}$, assuming $M$ is sufficiently large.
\end{lemma}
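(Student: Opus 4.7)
The plan is to argue by contrapositive. Assume $\mat v \in \mathbb{Z}^{M'}$ is nonzero with $\|\mat v\mat G\|_p \leq \gamma(N')^{1/p}$; I will build a random assignment $\sigma$ to the PCP variables whose expected number of satisfied constraints strictly exceeds $sM$, contradicting soundness.

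First I would collect the constraints that the four preceding lemmas impose on $\mat v$. Writing $c := 1/(1/2-1/p)$ (so $c \geq 2$ for all $p > 2$, with $c = 2$ when $p = \infty$), Lemmas \ref{lem:minnonzero} and \ref{lem:maxnonzero} sandwich $M/\log^3 M < \|\mat v\|_0 < M\gamma^3$; Lemma \ref{lem:numbertests} forces $\mat v$ to indicate at least $T > M/\gamma^{2c}$ distinct constraints; and Lemma \ref{lem:numberpairs} forces the total number of distinct indicated (variable, assignment) tuples to satisfy $A < N\gamma^4$. A small but essential observation is that the row-deletion step in the construction of $\mat G$ ensures every row of $\mat G$ encodes a complete satisfying assignment for some constraint, so each of the $T$ indicated constraints already has at least one indicated satisfying assignment, and every one of the $q$ tuples appearing in that assignment is itself indicated.

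The main step is a randomized subselection argument. Let $H_x$ denote the distinct assignment count of variable $x$ with respect to $\mat v$, so that $\sum_x H_x = A$, and define $\sigma$ by choosing $\sigma(x)$ independently and uniformly from the set of assignments indicated for $x$ (arbitrarily if none). For any indicated constraint $t$ on variables $x_1,\dots,x_q$, some indicated satisfying assignment $(\alpha_1,\dots,\alpha_q)$ exists, and $\sigma$ recovers it with probability at least $\prod_i 1/H_{x_i(t)}$. Setting $S_t := \sum_i H_{x_i(t)}$, AM-GM bounds this product below by $(q/S_t)^q$, and regularity gives the identity $\sum_t S_t = dA$ (each variable is touched $d$ times). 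Applying Jensen's inequality to the convex function $x\mapsto 1/x^q$ then yields
\[\mathbb{E}\bigl[\#\{\text{constraints satisfied by }\sigma\}\bigr] \;\geq\; q^q\sum_{t\text{ indicated}}\frac{1}{S_t^q} \;\geq\; \frac{q^q\,T^{q+1}}{(dA)^q}.\]

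The final step is bookkeeping. Substituting the bounds on $T$ and $A$ together with the regularity identity $Nd = Mq$ reduces the right-hand side to $M/\gamma^{2c(q+1)+4q}$, and unpacking $\gamma = (1/s)^{1/(25qc)}$ shows this is at least $M\,s^{\alpha}$ for an absolute constant $\alpha < 1$ (a direct calculation gives $\alpha \leq 1/5$, using $c \geq 2$ and $q \geq 2$). Since $(1/s)^{1/q} \geq r^{\omega(1)} = (\log M)^{\omega(1)}$, for $M$ large enough this strictly exceeds $sM$, so by the probabilistic method some deterministic $\sigma$ satisfies strictly more than an $s$-fraction of constraints, contradicting soundness. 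The conceptual heart of the argument is the AM-GM/Jensen chain in the subselection; the main thing I expect to verify carefully is the exponent arithmetic that converts the slack factors $\gamma^{2c}$ and $\gamma^4$ from the two ``gadget'' lemmas into a final exponent on $1/s$ that sits strictly below $1$, which is precisely why the constant $25$ appears in the definition of $\gamma$.
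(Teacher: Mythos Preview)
Your argument is correct and takes a genuinely different route from the paper's. Both proofs start identically---assume a short nonzero $\mat v$, invoke Lemmas \ref{lem:numbertests} and \ref{lem:numberpairs} to bound the number $T$ of indicated constraints from below and the number $A$ of indicated tuples from above, and then define the same random assignment $\sigma$. The divergence is in how each bounds the expected number of satisfied constraints. The paper performs a Markov-style separation: it isolates the set $\mathcal X_{\text{high}}$ of variables with distinct assignment count exceeding a threshold, throws away the constraints touching them (using regularity to argue there are few such constraints), and then on each surviving constraint every variable has uniformly bounded assignment count, yielding a flat lower bound on the satisfaction probability. You instead keep all indicated constraints and control the aggregate via convexity: AM--GM converts the product $\prod_i 1/H_{x_i(t)}$ into $(q/S_t)^q$, Jensen on $x\mapsto x^{-q}$ converts $\sum_t S_t^{-q}$ into a bound involving only $T$ and $\sum_t S_t$, and regularity bounds the latter by $dA$. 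Your approach is arguably cleaner---no case split, fewer auxiliary sets to track---and in fact yields a sharper final exponent ($s^{1/5}$ versus the paper's $s^{4/5}$), though of course either suffices. One small correction: the identity $\sum_t S_t = dA$ you state is really the inequality $\sum_t S_t \leq dA$, since the sum is only over \emph{indicated} constraints and a variable may appear in fewer than $d$ of those; the inequality goes in the right direction, so the argument is unaffected.
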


\begin{proof}
    Assume for contradiction that the PCP was not satisfiable, and there does exist a nonzero vector $\mat v \in \mathbb{Z}^{M'}$ with $\|\mat v\mat G\|_p \leq \gamma(N')^{1/p}$. By Lemmas \ref{lem:numbertests} and \ref{lem:numberpairs} respectively, we know:
    \begin{enumerate}
        \item $\mat v$ indicates at least $M/\gamma^{2/(1/2 - 1/p)}$ distinct constraints in the PCP.
        \item $\mat v$ indicates at most $N \cdot \gamma^4$ distinct (variable, assignment) tuples.
    \end{enumerate}
    Let $\mathcal{X}_{\textnormal{high}}$ be the set of all variables $x$ whose distinct assignment count with respect to $\mat v$ is at least $\gamma^{7 + 2/(1/2 - 1/p)}$. Our first step is to upper bound $\vert \mathcal{X}_{\textnormal{high}}\vert$.
    
    \begin{claim}
        $\vert \mathcal{X}_{\textnormal{high}} \vert \leq N/\gamma^{2 + 2/(1/2 - 1/p)}.$
    \end{claim}
    \begin{proof}
        Suppose not, then we would need for $\mat v$ to indicate at least
        \[\gamma^{7 + 2/(1/2 - 1/p)} \cdot N/\gamma^{2 + 2/(1/2 - 1/p)} = N \cdot \gamma^5\]
        distinct (variable, assignment) tuples, which violates the upper bound from Lemma \ref{lem:numberpairs} because $\gamma = (\log M)^{\omega(1)} > 1$ for sufficiently large $M$.
    \end{proof}
    
    Now we quantify the number of constraints that $\mathcal{X}_{\textnormal{high}}$ can interact with. Define $\mathcal{T}_{\textnormal{high}}$ as the set of all constraints in the PCP which are incident to at least one variable from $\mathcal{X}_{\textnormal{high}}$.
    
    \begin{claim}
        $\vert \mathcal{T}_{\textnormal{high}} \vert \leq M/\gamma^{1 + 2/(1/2 - 1/p)}.$
    \end{claim}

    \begin{proof}
        Because every variable appears in exactly $d$ constraints, we have
        \begin{align*}
            \vert \mathcal{T}_{\textnormal{high}}\vert \leq & d\vert \mathcal{X}_{\textnormal{high}}\vert \\
            \leq & Nd/\gamma^{2 + 2/(1/2 - 1/p)} \\
            = & Mq/\gamma^{2 + 2/(1/2 - 1/p)} \\
            \leq & M/\gamma^{1 + 2/(1/2 - 1/p)},
        \end{align*}
        where we used that $Nd = Mq$ and $q = O(\log M) < \gamma = (\log M)^{\omega(1)}$ for sufficiently large $M$.
    \end{proof}
    
    Let $\mathcal{T}_{\mat v}$ be the set of constraints indicated by $\mat v$, and define $\mathcal{T}_{\mat v, \textnormal{low}}$ as $\mathcal{T}_{\mat v}$ with the constraints in $\mathcal{T}_{\textnormal{high}}$ removed. We now argue that a significant fraction of the constraints must be left over.

    \begin{claim}
        $\vert \mathcal{T}_{\mat v, \textnormal{low}}\vert \geq M/\gamma^{1 + 2/(1/2 - 1/p)}.$
    \end{claim}

    \begin{proof}
        Recall that we must have $\vert \mathcal{T}_{\mat v}\vert \geq M/\gamma^{2/(1/2 - 1/p)}$ by Lemma \ref{lem:numbertests}. We have
        \begin{align*}
            \vert\mathcal{T}_{\mat v, \textnormal{low}}\vert \geq & \vert \mathcal{T}_{\mat v}\vert - \vert \mathcal{T}_{\textnormal{high}} \vert \\
            \geq & M/\gamma^{2/(1/2 - 1/p)} - M/\gamma^{1 + 2/(1/2 - 1/p)} \\
            \geq & 2 \cdot M/\gamma^{1 + 2/(1/2 - 1/p)} - M/\gamma^{1 + 2/(1/2 - 1/p)} \\
            \geq & M/\gamma^{1 + 2/(1/2 - 1/p)}
        \end{align*}
        where we used that $\gamma \geq 2$ when $M$ is sufficiently large.
    \end{proof}

    Now we derive a contradiction by implicating the soundness of the PCP. For each variable $x$ in the PCP, pick a tuple $(x, \sigma)$ uniformly at random from the set of all tuples indicated by $\mat v$ that involve $x$. (If $\mat v$ has no tuples for the variable $x$, assign $x$ to an arbitrary alphabet symbol.) Denote this random assignment as $\mathcal{R}$.

    \begin{claim}
        The expected number of constraints satisfied by assignment $\mathcal{R}$ is at least $\vert \mathcal{T}_{\mat v, \textnormal{low}}\vert /\gamma^{7q + 2q/(1/2 - 1/p)}$.
    \end{claim}

    \begin{proof}
        Observe that each constraint $t$ in $\mathcal{T}_{\mat v, \textnormal{low}}$ is by definition \emph{not} incident to a variable in $\mathcal{X}_{\textnormal{high}}$, meaning that each of its variables has at most $\gamma^{7 + 2/(1/2 - 1/p)}$ distinct assignments indicated by $\mat v$. (By definition, each incident variable must also have at least one assignment indicated by $\mat v$.) Now even if $\mat v$ only indicates one basis vector corresponding to $t$, the probability that the exact matching assignment is selected is $1/\gamma^{7q + 2q/(1/2 - 1/p)}$. (Keep in mind that, by construction of $\mat G$, every row that $\mat v$ can pick for a constraint will correspond to a \emph{satisfying} assignment.) By linearity of expectation, the expected number of satisfied constraints is at least $\vert \mathcal{T}_{\mat v, \textnormal{low}}\vert /\gamma^{7q + 2q/(1/2 - 1/p)}$.
    \end{proof}

    Using our lower bound on $\vert \mathcal{T}_{\mat v, \textnormal{low}}\vert$, we can re-write this expectation as
    \begin{align*}
        \vert \mathcal{T}_{\mat v, \textnormal{low}}\vert /\gamma^{7q + 2q/(1/2 - 1/p)} \geq & M/(\gamma^{1 + 2/(1/2 - 1/p)} \cdot \gamma^{7q + 2q/(1/2 - 1/p)}) \\
        \geq & M/(\gamma^{7q + 2q/(1/2 - 1/p)} \cdot \gamma^{7q + 2q/(1/2 - 1/p)}) \\
        \geq & M/\gamma^{20q/(1/2 - 1/p)} \\
        = & M/(1/s)^{((1/2 - 1/p)/(25q)) \cdot (20q/(1/2 - 1/p))} \\
        = & M \cdot s^{((1/2 - 1/p)/(25q)) \cdot (20q/(1/2 - 1/p))} \\
        = & M \cdot s^{4/5},
    \end{align*}
    where we used that $\gamma = (1/s)^{(1/2 - 1/p)/(25q)}$. By picking the best choice of randomness, we get a deterministic assignment which violates soundness.
\end{proof}

\noindent\begin{minipage}{0.3\textwidth}
\includegraphics[width=0.8\linewidth]{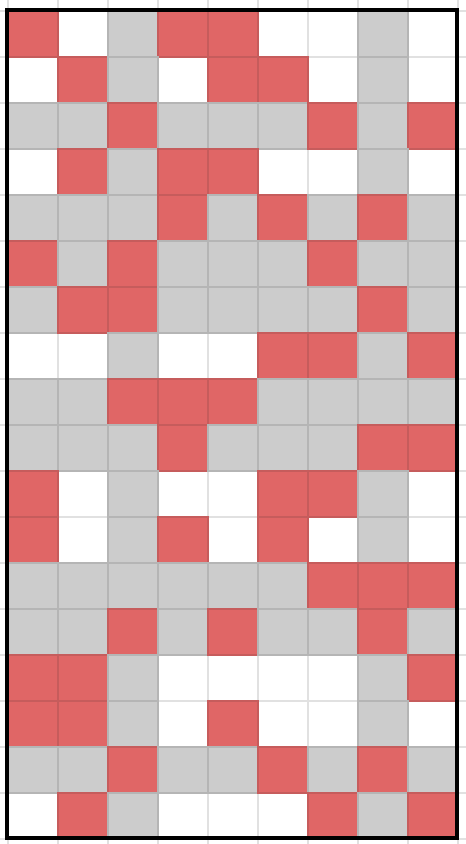}
\end{minipage}%
\hfill%
\begin{minipage}{0.02\textwidth}
\end{minipage}%
\hfill%
\begin{minipage}{0.66\textwidth}
An example PCP with $N = 9$, $M = 18$, $q = 3$, $d = 6$. Every column represents a variable, and every row represents a constraint. In this example, we have two variables in $\mathcal{X}_{\textnormal{high}}$; their columns are marked in gray. We also mark every constraint in $\mathcal{T}_{\textnormal{high}}$ using gray. We can afford to ``throw away'' all constraints in $\mathcal{T}_{\textnormal{high}}$ because $\vert\mathcal{T}_{\textnormal{high}}\vert \ll \vert\mathcal{T}_{\mat v}\vert$. (This effect becomes more pronounced as the PCP gets larger.)
\end{minipage}%
\hfill
\vspace{10px}

\section{Putting it all Together}
\label{sec:puttingittogether}

In this section we derive Corollaries \ref{cor:NPhard} and \ref{cor:sliding}. Recall the PCP theorem of \cite{dinur1999pcp}, the Sliding Scale Conjecture \cite{bellare1993efficient}, and the regularization technique of Hirahara and Moshkovitz \cite{hirahara2023regularization}.

\paragraph{Theorem \ref{thm:PCPbeta} \textnormal{(Restated, from~\cite{dinur1999pcp})}.}
    \emph{For every constant $\varepsilon > 0$, there exists a PCP for SAT instances of size $n$ that uses $r = O(\log n)$ random bits to make $q = O(1)$ queries to a proof over alphabet $\Sigma$, where the alphabet size is $\vert\Sigma\vert = 2^{\Theta(\log^{1 - \varepsilon} n)}$ and the soundness parameter is $s = 2^{-\Theta(\log^{1 - \varepsilon} n)}$. The PCP can be constructed deterministically in $n^{O(1)}$ time.}

\begin{conjecture}[Sliding Scale Conjecture, from~\cite{bellare1993efficient}]\label{conj:sliding}
    There exists a PCP for SAT instances of size $n$ that uses $r = O(\log n)$ random bits to make $q = O(1)$ queries to a proof over alphabet $\Sigma$, where the alphabet size is $\vert\Sigma\vert = n^{\Theta(1)}$ and the soundness parameter is $s = n^{-\Theta(1)}$. The PCP can be constructed deterministically in $n^{O(1)}$ time.
\end{conjecture}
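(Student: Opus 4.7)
The Sliding Scale Conjecture is a long-standing open problem in PCP theory, first posed by Bellare, Goldwasser, Lund, and Russell~\cite{bellare1993efficient} more than three decades ago, so any proposal I sketch is necessarily speculative. The plan I would attempt follows the two-step strategy that has produced the strongest known PCP results: first build a ``robust'' outer PCP that achieves polynomially small soundness (at the cost of many queries), and then apply proof composition to collapse the query complexity down to $O(1)$ while trying to preserve the outer soundness and the alphabet parameters.

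For step one, I would start from an \emph{algebraic} PCP based on low-degree testing over a large finite field $\mathbb{F}_\rho$ with $\rho = n^{\Theta(1)}$. The symbols of a Reed-Muller codeword over $\mathbb{F}_\rho$ serve as the proof locations, which directly gives alphabet size $|\Sigma| = n^{\Theta(1)}$; combined with sum-check style arithmetization, the low-degree test naturally yields soundness $s = 1/\textnormal{poly}(n)$. Unfortunately, low-degree testing requires $\textnormal{polylog}(n)$ queries per random string rather than $O(1)$, so by itself it does not satisfy the conjecture. For step two, I would recursively compose this outer PCP with a constant-query inner PCP tailored to verify each individual outer constraint, attempting to preserve the polynomially large alphabet and the polynomially small soundness throughout the composition.

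The main obstacle, which has blocked every attempt for over thirty years, is that every known composition theorem incurs a multiplicative loss in soundness that depends on the inner PCP's alphabet size. Because the outer alphabet in step one is already polynomially large, even a single composition step can degrade $1/\textnormal{poly}(n)$ soundness to a merely subconstant quantity, losing the conjecture's defining feature. Resolving the conjecture appears to demand either a genuinely new soundness-preserving composition theorem, or a direct construction of a constant-query, polynomially-large-alphabet PCP with polynomial soundness that bypasses composition altogether; the current frontier, due to Dinur and Steurer~\cite{dinur2014analytical}, achieves projection-game soundness of only $(\log n)^{-O(1)}$, and I do not have a concrete strategy to close that gap. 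In short, I can sketch the shape of the plan, but the key technical step is precisely what the community has been stuck on for decades.
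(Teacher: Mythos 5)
You were not actually asked to prove a theorem here: the statement is the Sliding Scale Conjecture of Bellare, Goldwasser, Lund, and Russell~\cite{bellare1993efficient}, and the paper does not prove it either. It is stated as Conjecture~\ref{conj:sliding} and used only as a hypothesis --- the paper's Corollary~\ref{cor:sliding} shows that \emph{if} the conjecture holds, then (after regularization via Theorem~\ref{thm:regtechnique}) the PCP it supplies meets the preconditions of Theorem~\ref{thm:mainresultformal}, yielding polynomial-factor NP-hardness of $\gamma$-GapSVP$_p$ for $p > 2$. So there is no ``paper proof'' to compare yours against, and declining to manufacture one was the right call.

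Your sketch of the landscape is accurate and consistent with the paper's own framing: the known algebraic/low-degree-testing route gives polynomially small soundness only with polylogarithmically many queries, composition to constant query complexity loses soundness in a way tied to the inner alphabet, and the best unconditional constant-query result in the relevant regime (Dinur--Steurer~\cite{dinur2014analytical}) achieves only $(\log n)^{-O(1)}$ soundness for projection games, versus $2^{-(\log n)^{1-\varepsilon}}$ for general constant-query PCPs as in Theorem~\ref{thm:PCPbeta}. One small point of calibration: the paper distinguishes the Sliding Scale Conjecture from the stronger Projection Games Conjecture (the former does not require the two-query projection structure), and it is the weaker, general-PCP form that suffices for Corollary~\ref{cor:sliding}; your discussion drifts toward the projection-game setting, which is a harder target than what the corollary actually needs. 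But as an assessment of the state of the art, your answer is correct, and no gap can be charged to you for not resolving a three-decade-old open problem that the paper itself only assumes.
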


\paragraph{Theorem \ref{thm:regtechnique} \textnormal{(Restated, from~\cite{hirahara2023regularization})}.}
    \emph{Assume that we have a PCP verifier $\mathcal{V}$ which uses $r$ random bits to make $q$ queries to a proof over alphabet $\Sigma$ and has soundness $s$. Assume further that $s \leq \min(1/(3q), 1/\vert\Sigma\vert^c)$, where $c$ is a constant with $0 < c \leq 1$. Then we can construct a new PCP verifier $\mathcal{V}'$ from $\mathcal{V}$ deterministically in $\vert \mathcal{V}\vert^{O(1)}$ time, such that the following holds. $\mathcal{V}'$ uses $r + O(\log(1/s))$ random bits to make $q^{O(1)}$ queries to a proof over the same alphabet $\Sigma$ and has soundness $s^{\Theta(1)}$. Additionally, every proof symbol is read for exactly $d = (q/s)^{\Theta(1)}$ choices of randomness.}\\

Now recall our main theorem:

\paragraph{Theorem \ref{thm:mainresultformal} \textnormal{(Restated)}.}
    \emph{Suppose there exists a PCP for SAT instances of size $n$ that uses $r$ random bits to read $q \leq O(r)$ locations in a proof over alphabet $\Sigma$ and has soundness parameter $s$. Suppose further that every proof location is read with equal probability, that $(1/s)^{1/q} \geq r^{\omega(1)}$, and that the PCP can be constructed deterministically in time $T(n)$. Then for all constants $p > 2$ (and for $p = \infty$), there exists a $T(n) + (2^r\vert\Sigma\vert^q)^{O(1)}$ time deterministic reduction from SAT instances of size $n$ to instances of $\gamma$-GapSVP$_p$ on lattices of dimension $O(2^r\vert\Sigma\vert^q)$, where $\gamma = (1/s)^{(1/2-1/p)/(25q)}$.}\\

We first prove the unconditional NP hardness result.

\paragraph{Corollary \ref{cor:NPhard} \textnormal{(Restated)}.}
    \emph{For all constants $\varepsilon > 0$ and $p > 2$, $\gamma$-GapSVP$_p$ on $n$ dimensional lattices is NP hard, where $\gamma = 2^{\log^{1 - \varepsilon} n}$.}

\begin{proof}
    Set $\varepsilon' = \varepsilon/2$. The PCP from Theorem \ref{thm:PCPbeta} meets the preconditions of Theorem \ref{thm:regtechnique}, so together they imply that there exists a PCP for SAT instances of size $n$ that uses $r = O(\log n)$ random bits to read $q = O(1)$ locations in a proof over alphabet $\Sigma$, where the alphabet size is $\vert\Sigma\vert = 2^{\Theta(\log^{1 - \varepsilon'}n)}$, the soundness parameter is $s = 2^{-\Theta(\log^{1 - \varepsilon'} n)}$, and every proof symbol is read for exactly $d = 2^{\Theta(\log^{1 - \varepsilon'} n)}$ choices of randomness. Clearly $q \leq O(r)$ in this case, and $(1/s)^{1/q} = 2^{\Theta(\log^{1 - \varepsilon'} n)} \geq r^{\omega(1)} = (\log n)^{\omega(1)}$. We also have that $2^r\vert\Sigma\vert^q = n^{O(1)}$. Putting all of this together, we can apply Theorem \ref{thm:mainresultformal} to show that $\gamma$-GapSVP$_p$ is NP hard on lattices of dimension $n^{O(1)}$, where $\gamma = 2^{\Theta(\log^{1 - \varepsilon'}n)}$. Rescaling the dimension of the lattice and using $\varepsilon$ in place of $\varepsilon'$ yields the corollary.
\end{proof}

The proof of the conditional NP hardness result is nearly the same.

\paragraph{Corollary \ref{cor:sliding} \textnormal{(Restated)}.}
    \emph{Assuming the Sliding Scale Conjecture, $\gamma$-GapSVP$_p$ on $n$ dimensional lattices is NP hard for all constants $p > 2$ (and for $p = \infty$), where $\gamma = n^c$ for a constant $c > 0$ that depends on $p$.}

\begin{proof}
    The PCP from Conjecture \ref{conj:sliding} meets the preconditions of Theorem \ref{thm:regtechnique}, so together they imply that there exists a PCP for SAT instances of size $n$ that uses $r = O(\log n)$ random bits to read $q = O(1)$ locations in a proof over alphabet $\Sigma$, where the alphabet size is $\vert\Sigma\vert = n^{\Theta(1)}$, the soundness parameter is $s = n^{-\Theta(1)}$, and every proof symbol is read for exactly $d = n^{\Theta(1)}$ choices of randomness. As before, $q \leq O(r)$, and $(1/s)^{1/q} = n^{\Theta(1)} \geq r^{\omega(1)} = (\log n)^{\omega(1)}$, and $2^r\vert\Sigma\vert^q = n^{O(1)}$. Putting all of this together, we can apply Theorem \ref{thm:mainresultformal} to show that $\gamma$-GapSVP$_p$ on lattices of dimension $n^{O(1)}$ is NP hard for all constants $p > 2$, where $\gamma = n^{\Theta(1)}$. Rescaling the dimension of the lattice yields the corollary.
\end{proof}

\section{Acknowledgments}
This research was supported in part from a Simons Investigator Award, DARPA SIEVE award, NTT Research, NSF grant 2333935, BSF grant 2022370, a Xerox Faculty Research Award, a Google Faculty Research Award, an Okawa Foundation Research Grant, and the Symantec Chair of Computer Science. This material is based upon work supported by the Defense Advanced Research Projects Agency through Award HR00112020024.

\newpage
\bibliographystyle{alpha}
\bibliography{refs.bib, crypto.bib}
\newpage

\appendix

\section{PCP Regularization}\label{app:regular}

For the convenience of the reader, in this section we prove the PCP Regularization theorem due to Hirahara and Moshkovitz \cite{hirahara2023regularization}, which we made use of to derive our main result:

\paragraph{Theorem \ref{thm:regtechnique} \textnormal{(Restated)}.}
    Assume that we have a PCP verifier $\mathcal{V}$ which uses $r$ random bits to make $q$ queries to a proof over alphabet $\Sigma$ and has soundness $s$. Assume further that $s \leq \min(1/(3q), 1/\vert\Sigma\vert^c)$, where $c$ is a constant with $0 < c \leq 1$. Then we can construct a new PCP verifier $\mathcal{V}'$ from $\mathcal{V}$ deterministically in $\vert \mathcal{V}\vert^{O(1)}$ time, such that the following holds. $\mathcal{V}'$ uses $r + O(\log(1/s))$ random bits to make $q^{O(1)}$ queries to a proof over the same alphabet $\Sigma$ and has soundness $s^{\Theta(1)}$. Additionally, every proof symbol is read for exactly $d = (q/s)^{\Theta(1)}$ choices of randomness.
\vspace{10px}

As in the body of our paper, we will adopt a CSP perspective (see Section \ref{sec:cspperspective}), but the proof below corresponds exactly to the proof of~\cite{hirahara2023regularization}. At a high level the plan is to construct $\mathcal{V}'$ by duplicating constraints and variables in a careful manner, and then connect constraints to the variable duplicates in accordance with an \emph{explicit disperser,} which is defined as follows.

\begin{definition}[$(\delta, \beta)$-Disperser]
    A $(\delta, \beta)$-disperser is a bi-regular bipartite graph $G = (U, V, E)$ such that for all $V' \subset V$ of size at most $\beta\vert V\vert$, at most $\delta\vert U\vert$ of the vertices $u \in U$ have all of their neighbors contained within $V'$.
\end{definition}

Hirahara and Moshkovitz's starting point for constructing an explicit disperser is the following theorem, which appears in a paper by Moshkovitz and Raz \cite{moshkovitz2008two}:

\begin{lemma}[\cite{moshkovitz2008two}] \label{lem:expander}
    There is a constant $\kappa < 1$ and a function $T(\Delta) = \Theta(\Delta)$ such that, given two natural numbers $n$ and $\Delta$, one can find in time poly$(n\Delta)$ an undirected multigraph $G = (V, E)$ where $\vert V\vert = n$, every vertex is of the same degree $T(\Delta)$, and the adjacency matrix for $G$ has its second largest eigenvalue being of magnitude at most $(T(\Delta))^\kappa$.
\end{lemma}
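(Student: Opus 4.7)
The plan is to obtain the desired multigraph by starting from an explicit constant-degree expander on $n$ vertices and then taking a power to boost its degree up to $\Theta(\Delta)$, exploiting the fact that graph powers preserve spectral gap in a controlled multiplicative way. Concretely, I would first invoke a standard explicit construction of a constant-degree expander $H$ on $n$ vertices, e.g.\ a Lubotzky--Phillips--Sarnak Ramanujan graph, a Margulis--Gabber--Galil graph, or a zig-zag--based construction. Such a graph has some constant degree $D_0$ and normalized second eigenvalue $\lambda < D_0$, with $\lambda/D_0$ a fixed constant less than $1$. A small nuisance is that explicit families are usually defined on a sparse set of values of $n$, so I would handle arbitrary $n$ by either choosing the next valid size and deleting/identifying vertices, or by padding with a disjoint union of small expanders; both perturbations change the parameters by only constant factors and can be done in poly$(n)$ time.

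Next I would set $k := \lceil \log_{D_0} \Delta \rceil$ and output the $k$-th power $H^k$, where each edge corresponds to a length-$k$ walk in $H$. This graph is $D_0^k$-regular and its adjacency matrix has eigenvalues that are precisely the $k$-th powers of those of $H$, so its second-largest eigenvalue has magnitude at most $\lambda^k$. By construction, $D_0^k \in [\Delta, D_0 \Delta]$, which gives $T(\Delta) := D_0^k = \Theta(\Delta)$ as required. Setting
\[
\kappa := \frac{\log \lambda}{\log D_0} < 1,
\]
we get $\lambda^k = (D_0^k)^{\kappa} = T(\Delta)^{\kappa}$, matching the eigenvalue bound claimed in the lemma. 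The constant $\kappa$ depends only on the base expander $H$ and is strictly less than $1$ because $\lambda < D_0$.

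For the running time, $H$ itself is computable in poly$(n)$ time, and $H^k$ can be produced by explicitly enumerating all length-$k$ walks from each vertex. Since $k = O(\log \Delta)$ and the output already has $n \cdot D_0^k = \Theta(n\Delta)$ edges, this enumeration runs in poly$(n\Delta)$ time and the bit-complexity of the output is also poly$(n\Delta)$, which is acceptable.

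The step I expect to be the main obstacle is neither the power construction nor the eigenvalue analysis (both essentially mechanical once the base expander is in hand), but rather the bookkeeping required to accommodate arbitrary $n$ from an explicit family that is naturally defined only on a sparse index set, while preserving the spectral bound up to the stated constants. Any of the standard workarounds (padding by disjoint small expanders, or simulating a graph on $n$ vertices by one on $n' = \Theta(n)$ vertices and folding) should suffice; the subtlety is verifying that the resulting spectral gap and regularity degrade only by constant factors that can be absorbed into the definition of $\kappa$ and $T(\Delta)$.
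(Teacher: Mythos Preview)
The paper does not give its own proof of this lemma; it is quoted as a black-box result from \cite{moshkovitz2008two} and used only as input to the disperser construction in Corollary~\ref{cor:disperser}. So there is nothing in the present paper to compare your argument against line by line.

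That said, your sketch is the standard and correct way to establish a statement of this form: start from an explicit constant-degree spectral expander and take a suitable power. The eigenvalue bookkeeping is right (eigenvalues of $H^k$ are $k$-th powers of those of $H$, giving $\lambda^k = (D_0^k)^{\log\lambda/\log D_0}$), and the identification $T(\Delta)=D_0^{\lceil\log_{D_0}\Delta\rceil}=\Theta(\Delta)$, $\kappa=\log\lambda/\log D_0<1$ is exactly what is needed. You are also correct that the only genuinely fiddly point is making the base expander available for \emph{every} $n$ rather than a sparse set of sizes; the usual fixes (rounding to a nearby valid size and folding, or adding a small regular patch) work and only perturb the constants absorbed into $\kappa$ and $T$. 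This is precisely the kind of argument underlying the cited result, so your proposal is adequate as a proof and consistent with how the paper uses the lemma.
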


It's well known \cite{vadhan2012pseudorandomness, hirahara2023regularization} that we can construct explicit dispersers by taking the set of all walks of a certain length in graphs where the second largest eigenvalue is of small magnitude. Using the graph from Lemma \ref{lem:expander} in such a construction, we can additionally ensure that the degree for both partitions of the expander is fixed independently of the partition sizes.

\begin{corollary}[Simplified version of corollary in \cite{hirahara2023regularization}] \label{cor:disperser}
    There exists a function $f(w, \beta) = w \cdot (1/\beta)^{O(w)}$ such that the following holds.
    For all parameters $w \geq 1$, $\beta \in (0, 1)$, and $A \geq (1/\beta)^{q + 1}$ there exists an explicit, poly$(A)$ time construction of a $(3\beta^w, \beta)$-disperser $G = ([A], [B], E)$, where $B = \frac{wA}{f(w, \beta)}$. Every vertex in the partition indexed by $[A]$ has degree $w$ and every vertex in the partition indexed by $[B]$ has degree $f(w, \beta)$.
\end{corollary}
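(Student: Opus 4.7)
The plan is to construct the disperser via the standard ``walks on an expander'' paradigm, using Lemma~\ref{lem:expander} as the source of our base expander. Let $[B]$ index the vertex set of a carefully chosen expander $H$, let $[A]$ index length-$w$ walks (sequences of $w$ consecutive vertices) in $H$, and connect each walk to the $w$ vertices it visits.

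First I would fix the base degree. By Lemma~\ref{lem:expander}, choosing the input parameter $\Delta_0 = \Theta((1/\beta)^{1/(1 - \kappa)})$ yields a $\Delta$-regular multigraph with $\Delta = T(\Delta_0) = \Theta(\Delta_0) = (1/\beta)^{O(1)}$ whose second eigenvalue has magnitude at most $\Delta^{\kappa}$, so that the \emph{normalized} second eigenvalue $\lambda := \Delta^{\kappa - 1}$ satisfies $\lambda \leq \beta$. I would set the number of vertices to be $B := A / \Delta^{w-1}$ (rounded appropriately); the hypothesis $A \geq (1/\beta)^{q+1}$ ensures $B$ is large enough that Lemma~\ref{lem:expander} applies, and the construction runs in poly($B \Delta$) $\subseteq$ poly($A$) time. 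This choice gives $f(w, \beta) = w \Delta^{w-1} = w \cdot (1/\beta)^{O(w)}$, matching the stated form.

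Next I would check the degree constraints by an edge count. The number of length-$w$ walks in a $\Delta$-regular graph on $B$ vertices is exactly $B \Delta^{w-1} = A$, so $|A|$ is correct. Each walk contributes $w$ edges to the bipartite disperser (one per vertex visited), so the left-degree is $w$ by construction. Total edge count is $wA$, so the right-degree is $wA/B = w \Delta^{w-1} = f(w, \beta)$, as required. Explicitness of the whole construction follows from explicitness of $H$.

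Finally, and this is where the real content lies, I would establish the dispersion property: for any $V' \subseteq [B]$ with $|V'| \leq \beta B$, the fraction of walks contained entirely in $V'$ is at most $3\beta^w$. The standard spectral argument proceeds by considering the operator $D_{V'} P$, where $P$ is the normalized transition matrix of $H$ and $D_{V'}$ is the diagonal indicator matrix of $V'$. Decomposing any vector into its component along the uniform distribution and its orthogonal component, and applying $\lambda \leq \beta$, one obtains $\| D_{V'} P v \|_2 \leq (\beta + \lambda) \| v \|_2$ on the subspace of vectors supported in $V'$. Iterating this bound starting from the uniform distribution, the probability that a random walk of length $w$ has all of its vertices in $V'$ is at most $(\beta + \lambda)^{w-1} \leq (2\beta)^{w-1}$; absorbing one factor of $2$ into the constant by slightly sharpening $\Delta$ yields the stated bound $3\beta^w$. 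Hence at most $3\beta^w \cdot A$ left vertices have all neighbors in $V'$. The main obstacle here is the spectral iteration: matching the exponent $w$ (versus $w-1$) and pinning down the leading constant requires some bookkeeping about walk length versus step count, but this is routine given Lemma~\ref{lem:expander}.
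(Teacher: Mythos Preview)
Your approach is exactly the one the paper indicates: the paper does not give a self-contained proof of this corollary but simply cites the well-known walks-on-expander construction (via Vadhan's monograph and Hirahara--Moshkovitz) applied to the graph of Lemma~\ref{lem:expander}, which is precisely what you have fleshed out. Regarding your flagged bookkeeping on matching the exponent $w$ and the leading constant $3$, the paper's remark immediately following the corollary notes that Hirahara--Moshkovitz in fact prove the bound $e\beta^w$ and that $3$ is adopted only for notational convenience, so the sharp constant-tracking is deferred to the cited source rather than reproduced.
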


\begin{remark}
    The corollary holds for $w = 1$ because in this case a matching between the two partitions constitutes a $(3\beta, \beta)$-disperser. We also note that Hirahara and Moshkovitz use a slightly stronger version of the corollary, which guarantees an efficient construction of an $(e\beta^w, \beta)$-disperser, where e is Euler's number. We use the constant $3$ instead of $e$ for ease of notation.
\end{remark}

Now we demonstrate how to prove the PCP Regularization theorem by using an explicit disperser.

\begin{proof}[Proof of Theorem \ref{thm:regtechnique}]
The reduction from a PCP verifier $\mathcal{V}$ to a regularized PCP verifier $\mathcal{V}'$ goes as follows:
\begin{enumerate}
    \item Duplicate each constraint in $\mathcal{V}$ exactly $1/s$ times, which means that each variable appears in at least $1/s$ constraints. This does not affect completeness or soundness. The number of constraints increases from $2^r$ to $2^r/s$.
    \item Do the following for each variable $x$ in the PCP verifier
    \begin{enumerate}
        \item Let $d(x)$ be the number of constraints that depend on $x$, and invoke Corollary \ref{cor:disperser} where $w~=~\lceil 6q/c\rceil$, $\beta = s^{1/(2q)}$, and $A = d(x)$. (Recall that $q$ is the number of queries made by the PCP verifier $\mathcal{V}$ and $0 < c \leq 1$ is a constant such that $s \leq 1/\vert\Sigma\vert^c$.) This gives us an explicit disperser $G_x~=~([A], [B], E)$, where the degree of each vertex in the $[B]$ partition is exactly $f(w, \beta)$.
        \item Create $B$ duplicates $x_1, \ldots x_B$ of the variable $x$.
        \item Arbitrarily order the $A = d(x)$ constraints that depend on the original (un-duplicated) variable~$x$.
        \item For all $i \in [d(x)]$, modify the $i$th constraint that depends on $x$ to instead depend on the $\lceil 6q/c\rceil$ copies of $x$ that correspond with the $\lceil 6q/c\rceil$ neighbors of the $i$th vertex in partition $[A]$ in the graph $G_x$. The modified constraint now performs its original test, in addition to ensuring that all incident copies of variable $x$ are equal to each other.
    \end{enumerate}
\end{enumerate}

Observe that the construction runs in time $\vert\mathcal{V}\vert^{O(1)}$, and the new PCP verifier $\mathcal{V}'$ immediately satisfies most of the conditions in Theorem \ref{thm:regtechnique}. In particular, $\mathcal{V}'$ uses $\log(2^r/s) = r + O(\log(1/s))$ bits of randomness, makes $q \cdot \lceil 6q/c\rceil = O(q^2)$ queries to a proof over the same alphabet as before, and every proof symbol is read for exactly $f(w, \beta) = q \cdot (1/s^{1/(2q)})^{O(q)} = (q/s)^{O(1)}$ choices of randomness. Completeness is also immediate: If $\mathcal{V}$ has a satisfying assignment, then we can simply assign each copy of each variable in accordance with the assignment for $\mathcal{V}$, and every constraint of $\mathcal{V}'$ is satisfied.

All that remains is to demonstrate soundness.

\begin{claim}
    Suppose that the original PCP verifier $\mathcal{V}$ was not satisfiable, i.e. every assignment to the variables satisfies at most an $s$ fraction of the constraints. Then every assignment to the variables in $\mathcal{V}'$ satisfies at most a $2\sqrt{s}$ fraction of the constraints.
\end{claim}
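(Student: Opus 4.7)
Given an arbitrary assignment $\pi'$ to the variables of $\mathcal{V}'$, the plan is to extract an assignment $\pi$ to the variables of $\mathcal{V}$ (possibly randomized) and argue that the fraction of constraints $\pi$ satisfies in $\mathcal{V}$ is at least $\left((2\sqrt{s})^2\right)/2 > s$, which by soundness of $\mathcal{V}$ yields a contradiction. To define $\pi$, for each original variable $x$ let $p_{x,\alpha} \coloneqq |V_{x,\alpha}|/B$ where $V_{x,\alpha}$ is the set of duplicates of $x$ assigned the value $\alpha$ by $\pi'$; then independently draw $\pi(x) = \alpha$ with probability $p_{x,\alpha}$. By averaging, some deterministic choice achieves at least the expected satisfaction fraction, which will suffice.

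The first step is to translate ``satisfied in $\mathcal{V}'$'' into information about $\pi$ using the disperser. Call a constraint $t \in \mathcal{V}'$ \emph{good} if it is satisfied by $\pi'$ and for every incident original variable $x$ the common value $\beta_{x,t}$ (that the $w$ copies of $x$ read by $t$ must agree on, since $t$ is satisfied) furthermore has $p_{x,\beta_{x,t}} > \beta = s^{1/(2q)}$. For each variable $x$ and each value $\alpha$ with $p_{x,\alpha} \le \beta$, the disperser property of $G_x$ (Corollary~\ref{cor:disperser}) says at most $3\beta^w d(x)$ constraints incident to $x$ can have all their $w$ copies of $x$ lying inside $V_{x,\alpha}$, hence at most that many satisfy $\beta_{x,t}=\alpha$. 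Summing over $\alpha$ and $x$ and dividing by the total number $q \cdot 2^r/s$ of (constraint, variable) incidences, the fraction of (constraint, variable) incidences with $\beta_{x,t}$ ``small-weight'' is at most $3|\Sigma|\beta^w$; a union bound over the $O(q^2)$ queries made by each constraint in $\mathcal{V}'$ then shows that the fraction of satisfied-but-not-good constraints is bounded by $O(q^2)\cdot 3|\Sigma|\beta^w$. The parameter choices $w = \lceil 6q/c \rceil$, $\beta = s^{1/(2q)}$, combined with the hypothesis $s \le 1/|\Sigma|^c$, make $|\Sigma|\beta^w \le s^{2/c}$, and using $s \le 1/(3q)$ this quantity is bounded by $\sqrt{s}$ up to the relevant constants. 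So if $\pi'$ satisfies a fraction $\tau > 2\sqrt{s}$ of constraints, then the fraction of good constraints is at least $\sqrt{s}$.

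Next I bound the expected number of constraints of the original $\mathcal{V}$ satisfied by $\pi$. For each original constraint $\tilde t$, if some copy $t^i$ of $\tilde t$ is good then the event ``$\pi(x) = \beta_{x,t^i}$ for every variable $x$ incident to $\tilde t$'' forces $\pi$ to satisfy $\tilde t$ (because $t^i$'s satisfaction in $\mathcal{V}'$ implies $\tilde t$'s predicate is satisfied by the common values). By independence of $\pi$'s coordinates across distinct variables and the lower bound $p_{x,\beta_{x,t^i}} > \beta$ guaranteed by ``goodness,'' this event has probability at least $\beta^q = \sqrt{s}$. Since each original constraint has $1/s$ copies, the number of \emph{original} constraints with at least one good copy is at least (fraction of good $\mathcal{V}'$-constraints)$\cdot 2^r \ge \sqrt{s}\cdot 2^r$. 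Linearity of expectation then gives an expected satisfaction fraction in $\mathcal{V}$ of at least $\sqrt{s}\cdot \sqrt{s} = s$, strictly exceeding $s$ once a tiny additive slack is tracked, contradicting the assumed $s$-soundness of $\mathcal{V}$.

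\textbf{Main obstacle.} The delicate part is the accounting in the first paragraph: one must correctly apply the disperser $G_x$ \emph{per variable}, convert the per-$(x,\alpha)$ bound on constraints with all $w$ copies landing in $V_{x,\alpha}$ into a global bound on satisfied-but-not-good constraints, and then verify that the parameter choices $\beta = s^{1/(2q)}$ and $w = \lceil 6q/c \rceil$ together with $s \le \min(1/(3q), 1/|\Sigma|^c)$ really do make this ``not good'' loss much smaller than $\sqrt{s}$, so that the two bounds $\sqrt{s}\times \sqrt{s}$ combine cleanly to beat $s$. The rest (defining $\pi$ by sampling, union bounding, derandomizing by averaging) is routine once those pieces fit.
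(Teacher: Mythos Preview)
Your approach is essentially the paper's: define ``good'' constraints via a popularity threshold on the common value, use the disperser per variable to bound the satisfied-but-not-good fraction, and then randomly decode to violate the original soundness. The paper's only cosmetic difference is that it takes the top $(1/s)^{1/(2q)}$ most popular values per variable and samples uniformly from those, whereas you threshold at $p_{x,\alpha}>\beta$ and sample proportionally; both yield a per-variable hit probability of at least $\beta$, hence $\beta^q=\sqrt{s}$ per constraint.

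One bookkeeping slip: your union bound should be over the $q$ \emph{original} variables incident to each constraint, not over the $O(q^2)$ variable-copy queries, since ``not good'' is an event indexed by original variables. With the correct factor $q$ you get $q\cdot 3|\Sigma|\beta^w \le q\cdot 3s^{2/c}\le q\cdot 3s^2\le s$ (the last step using $s\le 1/(3q)$), comfortably below $\sqrt{s}$. With your stated factor $O(q^2)$ the bound degrades to $O(qs)$, which is not $\le \sqrt{s}$ from the hypotheses alone when $q$ is large; fixing this factor is exactly the ``delicate accounting'' you flagged.
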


\begin{proof}
    Suppose for contradiction that there exists an assignment to the variables of $\mathcal{V}'$ that satisfies more than a $2\sqrt{s}$ fraction of the constraints. For each original variable $x$, consider the corresponding disperser $G_x = ([A], [B], E)$ and examine the set of variable copies $x_1, \ldots x_B$. Let $\Sigma_{x, \text{high}}$ be the $(1/s)^{1/(2q)}$ most popular assignments from $\Sigma$ given to the copies of variable $x$, and let $\Sigma_{x, \text{low}} = \Sigma \backslash \Sigma_{x, \text{high}}$. We say that a constraint of $\mathcal{V}'$ is bad if it is satisfied while simultaneously being incident to a copy of at least one variable $x$ that was assigned to a symbol in $\Sigma_{x, \text{low}}$.

    We first argue that, regardless of whether the original PCP verifier $\mathcal{V}$ was satisfiable, at most an $s$ fraction of the constraints in $\mathcal{V}'$ are bad. First consider any variable $x$ and any $\alpha \in \Sigma_{x, \text{low}}$. Notice that by definition of $\Sigma_{x, \text{low}}$, at most an $s^{1/(2q)}$ fraction of the copies of $x$ can be assigned to $\alpha$. This value of $s^{1/(2q)}$ is the parameter $\beta$ we chose when constructing our explicit dispersers. So by the definition of disperser, we know that at most a $3\beta^w$ fraction of $\mathcal X_{x,\alpha} = \{$the constraints in $\mathcal{V}'$ that are incident to a copy of $x \}$ will depend only on copies assigned to $\alpha$. These are the only constraints incident to at least one copy assigned to $\alpha$ that could possibly be satisfied, because our constraints check equality between all of their incident copies. Phrased more explicitly, at most a
    \[3\beta^w \leq 3(s^{1/(2q)})^{\lceil 6 q/c\rceil} \leq 3s^{3/c}\]
    fraction of the constraints in $\mathcal{V}'$ that are incident to a copy of $x$ will simultaneously (i) be satisfied and (ii) have the copy assigned to $\alpha$. Union bounding over the at most $\vert\Sigma\vert$ choices for $\alpha \in \Sigma_{x, \text{low}}$, and using the assumptions that $s \leq \min(1/(3q), 1/\vert\Sigma\vert^c)$ and $0 < c \leq 1$, at most a
    \[3s^{3/c} \cdot \vert\Sigma\vert \leq 3s^{2/c} \leq 3s^2 \leq s/q\]
    fraction of the constraints in $\mathcal{V}'$ incident to a copy of $x$ will simultaneously (i) be satisfied and (ii) have the copy assigned to any value in $\Sigma_{x, \text{low}}$. Because each constraint is incident to copies corresponding to $q$ different original variables, and each set of copies could cause the constraint to be bad, a final union bound over the arity $q$ gives the desired upper bound, that at most an $s$ fraction of the constraints in $\mathcal{V}'$ are bad.

    Because we assumed that we have an assignment which satisfies more than a $2\sqrt{s}$ fraction of the constraints in $\mathcal{V}'$, there must be more than a $2\sqrt{s} - s \geq \sqrt{s}$ fraction of constraints in $\mathcal{V}'$ that are satisfied and not bad. In other words, more than a $\sqrt{s}$ fraction of the constraints in $\mathcal{V}'$ are satisfied while simultaneously being incident only to copies of variables $x$ that are assigned to symbols in $\Sigma_{x, \text{high}}$. By definition, for all variables $x$ we have that $\vert\Sigma_{x, \text{high}}\vert \leq (1/s)^{1/(2q)}$. If we randomly pick a value in $\Sigma_{x, \text{high}}$ for each variable $x$ and take this as an assignment for the original PCP verifier $\mathcal{V}$, we expect to satisfy strictly more than a $\sqrt{s} \cdot \left(s^{1/(2q)}\right)^{q} = \sqrt{s} \cdot s^{1/2} = s$ fraction of the constraints in $\mathcal{V}$. Taking the best choice of randomness gives a deterministic assignment which contradicts the soundness of $\mathcal{V}$.
\end{proof}

With both completeness and soundness proven, the theorem follows.
\end{proof}

\section{Proof of Corollary \ref{cor:Holder}}
\label{app:Holder}

Recall Hölder's Inequality.

\paragraph{Theorem \ref{thm:Holder} \textnormal{(Restated)}.}
    \emph{Let $\mat u, \mat v \in \mathbb{Z}^n$, and let $p, q \geq 1$ be parameters satisfying $1/p + 1/q = 1$. Then}
    \[\|\mat u \odot \mat v\|_1 \leq \|\mat u\|_p \|\mat v \|_q,\]
    \emph{where $\odot$ denotes the component-wise product.}\\

We now prove Corollary \ref{cor:Holder}

\paragraph{Corollary \ref{cor:Holder} \textnormal{(Restated)}.}
    \emph{For all $\mat w \in \mathbb{Z}^n$ and $p \geq 2$ (including $p = \infty$),}
    \[\|\mat w\|_p \geq n^{1/p - 1/2}\|\mat w\|_2.\]

\begin{proof}
    We start with the special cases $p = 2$ and $p = \infty$. Notice that when $p = 2$, we have $n^{1/p - 1/2} = 1$, so the inequality holds immediately. Now consider $p = \infty$, and assume for contradiction that $\|\mat w\|_\infty < n^{-1/2}\|\mat w\|_2$. Then for all $i \in [n]$, we have $\vert\mat w_i\vert < n^{-1/2}\|\mat w\|_2$, and the $\ell_2$ norm is upper bounded as
    \begin{align*}
        \|\mat w\|_2 < & \left(\sum_{i = 1}^n n^{-1}(\|\mat w\|_2)^2\right)^{1/2} \\
        = & \left((\|\mat w\|_2)^2\right)^{1/2} \\
        = & \|\mat w\|_2. \\
    \end{align*}
    Clearly $\|\mat w\|_2 < \|\mat w\|_2$ is a contradiction.

    Now consider the case that $p > 2$ is finite. Set $\mat v \in \mathbb{Z}^n$ to be $\mat w \odot \mat w$, i.e. we square each component. Define $\mat u = \mat 1^n$, and set $a = p/(p-2) > 1$ and $b = p/2 > 1$. Because $1/a + 1/b = (p-2)/p + 2/p = 1$, an application of Theorem \ref{thm:Holder} gives
    \[\|\mat u \odot \mat v\|_1 \leq \|\mat u\|_a \|\mat v\|_b.\]
    By the definition of $p$ norm we can write this as
    \[\sum_{i = 1}^n \vert\mat u_i \mat v_i\vert \leq \left(\sum_{i = 1}^n \vert \mat u_i^a\vert\right)^{1/a} \left(\sum_{i = 1}^n \vert \mat v_i^b\vert\right)^{1/b}.\]
    Now because $\mat v$ is the component-wise square of $\mat w$, and $\mat u$ is the all-ones vector, we have
    \begin{align*}
        \sum_{i = 1}^n \vert\mat u_i \mat v_i\vert \leq & \left(\sum_{i = 1}^n \vert \mat u_i^a\vert\right)^{1/a} \left(\sum_{i = 1}^n \vert \mat v_i^b\vert\right)^{1/b} \\
        \sum_{i = 1}^n \vert\mat w_i^2\vert \leq & \left(\sum_{i = 1}^n 1\right)^{1/a} \left(\sum_{i = 1}^n \vert \mat w_i^{2b}\vert\right)^{1/b} \\
        \sum_{i = 1}^n \vert\mat w_i^2\vert \leq & n^{(p - 2)/p} \left(\sum_{i = 1}^n \vert \mat w_i^{p}\vert\right)^{2/p} \\
        (\|\mat w\|_2)^2 \leq & n^{(p-2)/p} (\|\mat w\|_p)^2.
    \end{align*}
    Re-arranging and taking square roots yields
    \begin{align*}
        n^{(p-2)/p} (\|\mat w\|_p)^2 \geq & (\|\mat w\|_2)^2 \\
        (\|\mat w\|_p)^2 \geq & n^{(2-p)/p}(\|\mat w\|_2)^2 \\
        \|\mat w\|_p \geq & n^{1/p - 1/2}\|\mat w\|_2.
    \end{align*}
\end{proof}

\end{document}